\theoremstyle{thmstyleone}%
\newtheorem{theorem}{Theorem}%  meant for continuous numbers
\newtheorem{proposition}[theorem]{Proposition}% 
\newtheorem{corollary}[theorem]{Corollary}% 
\newtheorem{lemma}[theorem]{Lemma}% 
\theoremstyle{thmstyletwo}%
\newtheorem{remark}{Remark}%
\theoremstyle{thmstylethree}%
\newtheorem{definition}{Definition}%
\newtheorem{construction}{Construction}
\DeclarePairedDelimiter\abs{\lvert}{\rvert}
\DeclarePairedDelimiter\ceil{\lceil}{\rceil}
\DeclarePairedDelimiter\floor{\lfloor}{\rfloor}
\DeclarePairedDelimiter\parenv{\lparen}{\rparen}
\DeclarePairedDelimiter\set{\{}{\}}
\renewcommand{\leq}{\leqslant}
\renewcommand{\geq}{\geqslant}
\newcommand{\cB}{\mathcal{B}}
\newcommand{\bfa}{\mathbf{a}}
\newcommand{\bfb}{\mathbf{b}}
\newcommand{\bfc}{\mathbf{c}}
\newcommand{\bfe}{\mathbf{e}}
\newcommand{\bfh}{\mathbf{h}}
\newcommand{\bfi}{\mathbf{i}}
\newcommand{\bfj}{\mathbf{j}}
\newcommand{\bfl}{\boldsymbol{\ell}}
\newcommand{\bfs}{\mathbf{s}}
\newcommand{\bfy}{\mathbf{y}}
\newcommand{\bfz}{\mathbf{z}}
\newcommand{\bfzero}{\mathbf{0}}
\newcommand{\bfone}{\mathbf{1}}
\newcommand{\eps}{\varepsilon}
\newcommand{\bfeps}{\boldsymbol{\varepsilon}}
\newcommand{\bfDelta}{\boldsymbol{\Delta}}
\newcommand{\F}{\mathbb{F}}
\newcommand{\N}{\mathbb{N}}
\newcommand{\Z}{\mathbb{Z}}
\newcommand{\eqdef}{\triangleq}
\DeclareMathOperator{\wt}{wt}
\newcommand{\T}{\intercal}
\newcommand{\str}{\mathrm{str}}
\DeclareMathOperator{\ord}{ord}
\begin{document}

\title[]{On Multidimensional 2-Weight-Limited Burst-Correcting Codes}

%%=============================================================%%
%% Prefix	-> \pfx{Dr}
%% GivenName	-> \fnm{Joergen W.}
%% Particle	-> \spfx{van der} -> surname prefix
%% FamilyName	-> \sur{Ploeg}
%% Suffix	-> \sfx{IV}
%% NatureName	-> \tanm{Poet Laureate} -> Title after name
%% Degrees	-> \dgr{MSc, PhD}
%% \author*[1,2]{\pfx{Dr} \fnm{Joergen W.} \spfx{van der} \sur{Ploeg} \sfx{IV} \tanm{Poet Laureate} 
%%                 \dgr{MSc, PhD}}\email{iauthor@gmail.com}
%%=============================================================%%

\author[1]{\fnm{Hagai} \sur{Berend}}\email{hagaiber@post.bgu.ac.il}

\author[1]{\fnm{Ohad} \sur{Elishco}}\email{elishco@gmail.com}

\author[1,2]{\fnm{Moshe} \sur{Schwartz}}\email{schwartz.moshe@mcmaster.ca}

\affil[1]{\orgdiv{School of Electrical and Computer Engineering}, \orgname{Ben-Gurion University of the Negev}, \orgaddress{\city{Beer Sheva}, \postcode{8410501}, \country{Israel}}}

\affil[2]{\orgdiv{Department of Electrical and Computer Engineering}, \orgname{McMaster University}, \orgaddress{\city{Hamilton}, \postcode{L8S 4K1}, \state{ON}, \country{Canada}}}

%%==================================%%
%% sample for unstructured abstract %%
%%==================================%%

\abstract{
We consider multidimensional codes capable of correcting a burst error of weight at most $2$. When two positions are in error, the burst limits their relative position. We study three such limitations: the $L_\infty$ distance between the positions is bounded, the $L_1$ distance between the positions is bounded, or the two positions are on an axis-parallel line with bounded distance between them. In all cases we provide explicit code constructions, and compare their excess redundancy to a lower bound we prove.
}

\keywords{Error-correcting codes, burst errors, multidimensional codes, packing designs, $L_{\infty}$-metric, $L_1$-metric, Lee metric}

%%\pacs[JEL Classification]{D8, H51}

\pacs[MSC Classification]{94B20, 94B35, 94B65}

\maketitle

%%%%%%%%%%%%%%%%%%%%%%%%%%%%%%%%%%%%%%%%%%%%%%%%%%%%%%%%%%%%%%%%
%%%%%%%%%%%%%%%%%%%%%%%%%%%%%%%%%%%%%%%%%%%%%%%%%%%%%%%%%%%%%%%%
%%%%%%%%%%%%%%%%%%%%%%%%%%%%%%%%%%%%%%%%%%%%%%%%%%%%%%%%%%%%%%%%
\section{Introduction}
%%%%%%%%%%%%%%%%%%%%%%%%%%%%%%%%%%%%%%%%%%%%%%%%%%%%%%%%%%%%%%%%
%%%%%%%%%%%%%%%%%%%%%%%%%%%%%%%%%%%%%%%%%%%%%%%%%%%%%%%%%%%%%%%%
%%%%%%%%%%%%%%%%%%%%%%%%%%%%%%%%%%%%%%%%%%%%%%%%%%%%%%%%%%%%%%%%

Burst-error-correcting codes have a long history. Such codes are capable of correcting localized errors. In the classical one-dimensional setting, codewords are vectors, and a burst of length $b$ is a set of errors that occur within $b$ consecutive positions.

It is, however, the multidmensional case which brings to light the richness of burst errors. In this setting, a code $C\subseteq \F_2^{n_1\times n_2 \times \dots \times n_D}$ is a set of $D$-dimensional, $n_1\times n_2\times \dots \times n_D$ binary arrays, that can correct any number of errors (bit flips) occurring in a localized area. The definition of the shape of this local area has many variants in the literature, of which we mention a short selection. Perhaps the most commonly studied model defines a burst as a pattern of errors confined to a $D$-dimensional box of fixed size $b_1\times \dots \times b_D$, e.g.,~\cite{AbdMcETil88,BlaFar94,BreBosZyaSid98,Boy13}. Other definitions include criss-cross errors~\cite{Rot91}, axis-parallel errors~\cite{SchEtz05}, $L_1$-metric balls~\cite{EtzYaa09}, and more generally, any connected area of a prescribed volume~\cite{BlaBruVar98,EtzSchVar05,JiaCooBru06,SliBru09}.

Etzion and Yaakobi~\cite{EtzYaa09} raised the interesting problem of designing codes capable of correcting a burst of limited weight. In this setting, not only are the erroneous positions confined to a certain multidimensional burst shape, but they are also limited to a prescribed Hamming weight. They studied this in two-dimensions for $b_1\times b_2$ box-shaped bursts, and later in~\cite{YaaEtz10}, for multidimensional codes in the extreme case of errors up to weight $2$, confined to either crosses or semi-crosses.

In this work we continue studying multidimensional error-correcting codes, capable of correcting a burst of error with a weight limit of $2$. Our main contributions are as follows: We consider the various burst shapes that have been studied in the unrestricted-burst-weight case. In the $2$-weight-limited case, when two errors actually occur, they become either two positions with limited $L_\infty$ distance, two positions with limited $L_1$-metric distance, or two positions along an axis-parallel line with limited distance (the straight model). In all of these cases we provide explicit linear code constructions. We also prove lower bounds on the excess redundancy of the codes. The results are summarized in Table~\ref{tab:summary} in Section~\ref{sec:conc}. We note that when the dimension $D$ and the burst size $b$ are constant, while the codeword size is arbitrarily large, the excess redundancy of our constructions is at most a constant above the lower bound.

The construction techniques we employ are varied. In some cases, even though we construct binary codes, we use codes over $\F_q$ that can correct arbitrary errors (not bursts) in the Lee metric. In another case we use packing designs to carefully devise a parity-check matrix for our code.

The remainder of this paper is organized as follows. 
Section~\ref{sec:prelim} provides the necessary preliminaries, definitions, and notation used throughout the paper. 
Sections~\ref{sec:linf},~\ref{sec:l1}, and~\ref{sec:straight}, are dedicated to the $L_{\infty}$, $L_{1}$, and straight models, respectively. In each of these sections, we present our code constructions, prove their error-correction capabilities, analyze their excess redundancy, and derive a lower bound on the excess redundancy for any code in that model. 
Finally, Section~\ref{sec:conc} concludes the paper with a summary of our results and a discussion of their implications.

%%%%%%%%%%%%%%%%%%%%%%%%%%%%%%%%%%%%%%%%%%%%%%%%%%%%%%%%%%%%%%%%
%%%%%%%%%%%%%%%%%%%%%%%%%%%%%%%%%%%%%%%%%%%%%%%%%%%%%%%%%%%%%%%%
%%%%%%%%%%%%%%%%%%%%%%%%%%%%%%%%%%%%%%%%%%%%%%%%%%%%%%%%%%%%%%%%
\section{Preliminaries}
\label{sec:prelim}

For a positive integer $n\in\N$, we define $[n]\eqdef\set{0,1,\dots,n-1}$. Let $\F_q$ denote the finite field of size $q$. Since we shall be interested in binary codes, we shall only use $\F_2$ and its extensions, $\F_{2^m}$. We use bold lower-case letters to denote vectors and upper-case letters for matrices, e.g., $\bfa\in\F_2^n$ denotes a binary vector of length $n$, and $G\in\F_2^{k\times n}$ denotes a $k\times n$ binary matrix. We assume all vectors are row vectors, and transpose them if we need a column vector. We also use $\bfzero$ to denote the all-zero vector, and $\bfone$ to denote the all-one vector.

A binary $[N,k]$ linear code, $C$, is simply a $k$-dimensional vector space, $C\subseteq\F_2^N$, $\dim(C)=k$. We call the vectors in $C$ \emph{codewords}. Since we are interested in multi-dimensional codes, the components of each codeword are mapped to a $D$-dimensional array of size $n^{\times D}=n\times n\times \dots \times n$, with $N=n^D$. We therefore index the components of each codeword by $\bfi\in[n]^D$. Thus, a codeword $\bfc\in C$ may be written as $\bfc=(c_{\bfi})_{\bfi\in [n]^D}$. We also say that the code is $D$-dimensional with length $n^{\times D}$, and we may say it has parameters $[n^{\times D},k]$. We mention briefly that the results in this paper may be easily generalized to an asymmetric case, where the $D$-dimensional medium is indexed by $[n_0]\times [n_1]\times \dots \times [n_{D-1}]$, but we only describe the symmetric case for ease of presentation.

We shall find it useful to specify $C$ by means of a parity-check matrix $H\in\F_2^{(N-k)\times N}$ such that $C=\ker(H)$. We denote the column of $H$ at coordinate $\bfi\in [n]^D$, as $\bfh^\T_{\bfi}$. By definition, $\bfc\in C$ if and only if $H\bfc^\T=\bfzero^\T$, namely, $\sum_{\bfi\in [n]^D} c_{\bfi}\bfh^\T_{\bfi}=\bfzero^\T$. The redundancy of the code $C$ is defined as 
\[r(C)\eqdef N-k.\]
Following~\cite{AbdMcETil88,EtzYaa09}, we shall find it useful to discuss the \emph{excess redundancy} of the code $C$, which is defined as
\[\xi(C)\eqdef r(C)-\ceil{\log_2 N}.\]
As is common practice, we shall sometimes conveniently group together $m$ rows of $H$ and write them as a single row with entries from $\F_{2^m}$ (replacing each $m$ vertical binary entries with a single element from $\F_{2^m}$ using the well-known isomorphism between $\F_{2}^m$ and $\F_{2^m}$).

Assume $\bfc\in C$ is transmitted, and $\bfc+\bfe$ is received, where $\bfe\in\F_2^N$ is called the error vector. The number of errors is the (Hamming) weight of $\bfe$, defined by
\[
\wt(\bfe) \eqdef \abs*{\set*{\bfi\in [n]^D ~:~ e_{\bfi}\neq 0}}.
\]
In this work we are interested in low-weight burst errors. We recall that a burst error vector has all of its non-zero entries in some local area (which will soon be made precise). Thus, the lowest weight restriction which is non-trivial, is $2$.

We study three models of burst shapes defining the local area it occupies. We use the notation of $b$-closeness, $b\in\Z$, $b\geq 0$, defined as follows:
\begin{itemize}
\item The $L_\infty$ model: $\bfi,\bfj\in [n]^D$ are $b$-close iff $\max_{t\in[D]}\set{\abs{i_t-j_t}} < b$.
\item The $L_1$ model: $\bfi,\bfj\in [n]^D$ are $b$-close iff $\sum_{t\in[D]}\abs{i_t-j_t} < b$.
\item The straight model: $\bfi,\bfj\in [n]^D$ are $b$-close iff $\sum_{t\in[D]}\abs{i_t-j_t} < b$ and $\wt(\bfi-\bfj)=1$.
\end{itemize}

The three models of $b$-closeness are depicted in Fig.~\ref{fig:models}.

\begin{figure}
\begin{center}
\begin{overpic}[scale=0.5]
{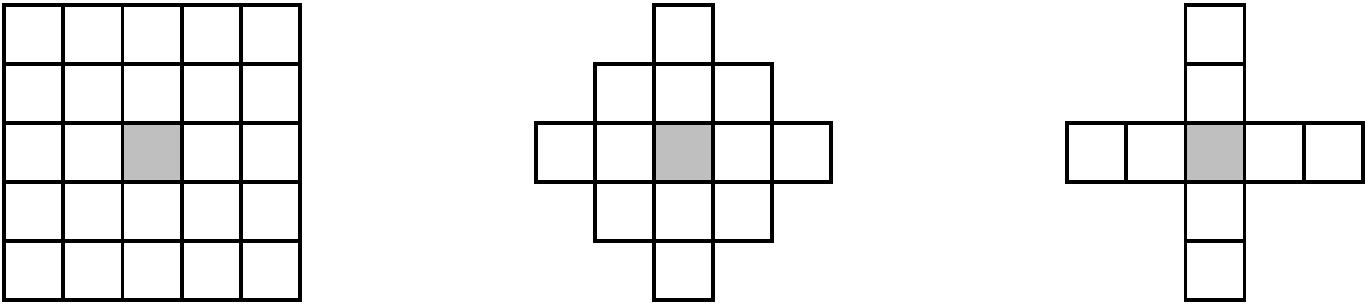}
\put(-5,10){(a)}
\put(34,10){(b)}
\put(73,10){(c)}
\end{overpic}
\end{center}
\caption{A depiction of the three $b$-closeness models. Identify each square with a point of $\Z^2$, and let $\bfi$ be the shaded square. The $2$-dimensional $3$-close positions to $\bfi$ are then the squares in (a) the $L_\infty$ model, (b) the $L_1$ model, and (c) the straight model.}
\label{fig:models}
\end{figure}

\begin{definition}
An error vector $\bfe\in\F_2^N$ is a $2$-weight-limited $b$-burst in the $L_\infty$/$L_1$/straight model, if one of the following holds:
\begin{enumerate}
\item $\wt(\bfe)\leq 1$.
\item $\wt(\bfe)=2$, $e_{\bfi}=e_{\bfj}=1$, $\bfi\neq \bfj$, and $\bfi$ and $\bfj$ are $b$-close in the $L_\infty$/$L_1$/straight model.
\end{enumerate}
\end{definition}

As a side note, we observe that in the one-dimensional case, $D=1$, the three models described above coincide, and their set of error vectors are the same.

Denote by $E_\infty(b,N)$ the set of all $2$-weight-limited $b$-bursts of length $N$ in the $L_\infty$ model. Assume $C_\infty$ is an error-correcting code for the error patterns in $E_\infty(b,N)$, defined by a parity-check matrix $H$. This is equivalent to the fact that all error patterns in $E_\infty(b,N)$ have distinct syndromes, namely, if $\bfe_1,\bfe_2\in E_{\infty}(b,N)$, $\bfe_1\neq\bfe_2$, then $H\bfe_1^\T\neq H\bfe_2^\T$. The resulting ball-packing-like argument (e.g., see~\cite{MacSlo78}) implies that
\begin{equation}
\label{eq:rlowbound}
r(C_\infty) \geq \log_2 \abs*{E_\infty(b,N)}.
\end{equation}
Analogous statements may be made for the $L_1$ model with $E_1(b,N)$, and the straight model with $E_\str(b,N)$.

In the constructions of the following sections, we make use of the following notation. Let $\bfi\in \Z^D$ be a vector of integers, $\bfi=(i_0,i_1,\dots,i_{D-1})$. We then define the notation
\[
[\bfi]_q \eqdef \sum_{t\in[D]}i_t q^t.
\]

The following proposition is folklore:

\begin{proposition}
Let $q\geq 2$ be an integer.
\label{prop:baseq}
\begin{enumerate}
\item
If $\bfi\in[q]^D$ then
\[
0 \leq [\bfi]_q \leq q^D-1,
\]
and $\bfi\mapsto [\bfi]_q$ is a bijection between $[q]^D$ and $[q^D]$.
\item
If $\bfi\in\set{-(q-1),\dots,q-1}^D$ then
\[
\abs*{[\bfi]_q}\leq q^D-1.
\]
Additionally, $[\bfi]_q=0$ if and only if $\bfi=\bfzero$.
\end{enumerate}
\end{proposition}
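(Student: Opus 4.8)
The plan is to treat both parts as the standard facts about base-$q$ (mixed-radix) representations: the bounds follow from a finite geometric summation, and the two structural claims (bijectivity in Part~1, the vanishing characterization in Part~2) both follow from a single \emph{leading-coordinate} estimate that I would isolate and reuse.

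For Part~1, the lower bound $0\leq[\bfi]_q$ is immediate since each summand $i_t q^t$ is nonnegative. For the upper bound I would maximize digitwise, $[\bfi]_q\leq\sum_{t\in[D]}(q-1)q^t=(q-1)\frac{q^D-1}{q-1}=q^D-1$, using the closed form of the geometric series. Since $[q]^D$ and $[q^D]$ both have cardinality $q^D$, and the bounds show $\bfi\mapsto[\bfi]_q$ maps $[q]^D$ into $[q^D]$, it suffices to prove injectivity. I would argue directly: if $\bfi\neq\bfj$ in $[q]^D$, let $s$ be the largest index at which they differ; then $[\bfi]_q-[\bfj]_q=(i_s-j_s)q^s+\sum_{t<s}(i_t-j_t)q^t$, where $\abs{i_s-j_s}\geq 1$ while $\abs{\sum_{t<s}(i_t-j_t)q^t}\leq(q-1)\sum_{t<s}q^t=q^s-1<q^s$. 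Hence the leading term strictly dominates the tail and the difference is nonzero, giving injectivity; by the cardinality match, the map is a bijection.

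For Part~2, the bound follows from the triangle inequality exactly as above: $\abs{[\bfi]_q}\leq\sum_{t\in[D]}\abs{i_t}q^t\leq(q-1)\sum_{t\in[D]}q^t=q^D-1$. For the vanishing characterization, one direction is trivial, since $\bfi=\bfzero$ gives $[\bfi]_q=0$. For the converse I would reuse the leading-coordinate idea: assuming $\bfi\neq\bfzero$, let $s$ be the largest index with $i_s\neq 0$; then $\abs{[\bfi]_q}\geq\abs{i_s}q^s-\sum_{t<s}\abs{i_t}q^t\geq q^s-(q^s-1)=1>0$, so $[\bfi]_q\neq 0$.

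There is no real obstacle here, as the proposition is folklore; the only point requiring care is making the leading-coordinate estimate precise, namely ensuring that the tail $\sum_{t<s}(\cdot)q^t$ is strictly dominated by the leading term $q^s$. This is exactly what the inequality $(q-1)\sum_{t<s}q^t=q^s-1<q^s$ guarantees, and since this single estimate underlies both the injectivity in Part~1 and the vanishing characterization in Part~2, I would state it once and invoke it in both places.
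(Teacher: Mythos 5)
Your proof is correct and follows essentially the same approach as the paper: geometric-series bounds for the range estimates and a leading-coordinate domination argument for the vanishing characterization in Part~2. The only difference is cosmetic --- where the paper handles injectivity in Part~1 by citing the well-known uniqueness of base-$q$ representations, you prove it explicitly by reusing the same leading-term estimate, which makes the argument slightly more self-contained but is not a different route.
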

\begin{proof}
For the first claim, if $i_t\in[q]$ for all $t\in[D]$, then $[\bfi]_q$ is simply the value associated with the base-$q$ representation $\bfi$. In that case, it is well known that distinct $\bfi$ result in distinct $[\bfi]_q$, i.e., each value has a unique base-$q$ representation. Also, in this case,
\[
0 \leq [\bfi]_q \leq (q-1)\sum_{t=0}^{D-1} q^t = q^D-1.
\]
For the second claim, if we now allow $i_t\in\set{-(q-1),\dots,q-1}$ for all $t\in[D]$, then we can say
\[
\abs*{[\bfi]_q}\leq (q-1)\sum_{t=0}^{D-1} q^t = q^D-1.
\]
We obviously have $[\bfzero]_q=0$. We now show $\bfzero$ is the only vector with this property. Assume $\bfi\neq\bfzero$, and let $t'$ be the largest index for which $i_{t'}\neq 0$. Then
\[
\abs*{i_{t'}q^{t'}} \geq q^{t'} > (q-1)\frac{q^{t'}-1}{q-1} \geq \abs*{\sum_{t=0}^{t'-1} i_t q^t},
\]
namely, the value contributed by the largest index with non-zero entry in $\bfi$ dominates all the rest. Thus, $\bfi\neq\bfzero$ implies $[\bfi]_q\neq 0$.
\end{proof}

Another useful notation is applying the same operation to all the vector elements. For example
\[
\bfi \bmod m \eqdef (i_0 \bmod m, i_1 \bmod m, \dots , i_{D-1}\bmod m).
\]
We extend this notation convention to other operations, e.g., $\floor{\bfi/m}$ denotes the vector whose entries are the entries of $\bfi$ divided by $m$ and rounded down.

%%%%%%%%%%%%%%%%%%%%%%%%%%%%%%%%%%%%%%%%%%%%%%%%%%%%%%%%%%%%%%%%
%%%%%%%%%%%%%%%%%%%%%%%%%%%%%%%%%%%%%%%%%%%%%%%%%%%%%%%%%%%%%%%%
%%%%%%%%%%%%%%%%%%%%%%%%%%%%%%%%%%%%%%%%%%%%%%%%%%%%%%%%%%%%%%%%

\section{The $L_\infty$ Model}
\label{sec:linf}

In this section we provide a construction for codes that correct a $2$-weight-limited $b$-burst in the $L_\infty$ model. We first give a general construction, and then a slightly more restricted one with a lower excess redundancy.

\begin{construction}
\label{con:linf}
Let $n\geq b\geq 2$, and $D\geq 1$ be integers. Define
\begin{align*}
m & \eqdef \ceil*{\log_2 (n^D+1)} & a &\eqdef \ceil*{\log_2 (b^D+1)}.
\end{align*}
Let $\alpha\in \F_{2^m}$ and $\beta\in \F_{2^a}$ be primitive elements in their respective fields. Let $H$ be a matrix, whose columns are indexed by $[n]^D$, and whose $\bfi$-th column is
\begin{equation}
\label{eq:Hlinf}
\bfh^\T_\bfi = \begin{pmatrix}
\beta^{[\bfi \bmod b]_b} \\
\beta^{3[\bfi \bmod b]_b} \\
\floor{\bfi^\T / b} \bmod 2 \\
\alpha^{[\bfi]_n}
\end{pmatrix}.
\end{equation}
We construct the $D$-dimensional binary code $C$, whose parity-check matrix is $H$.
\end{construction}

\begin{theorem}
\label{th:linf}
Let $C$ be the code with parity-check matrix $H$ from Construction~\ref{con:linf}. Then $C$ is an $[N=n^D,k]$ $D$-dimensional code of length $n^{\times D}$, capable of correcting a single $2$-weight-limited $b$-burst in the $L_\infty$ model.
\end{theorem}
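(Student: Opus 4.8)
The plan is to prove the equivalent statement that the syndrome map $\bfe\mapsto H\bfe^\T$ is injective on $E_\infty(b,N)$; as recalled around~\eqref{eq:rlowbound}, this is exactly the condition for $C$ to correct every error in $E_\infty(b,N)$. It is convenient to read a syndrome $H\bfe^\T$ as four blocks matching~\eqref{eq:Hlinf}: a pair $(S_1,S_3)$ over $\F_{2^a}$, a binary vector $P\in\F_2^D$, and a single symbol $A\in\F_{2^m}$. I write each position as $\bfi=b\floor{\bfi/b}+(\bfi\bmod b)$, calling $\bfi\bmod b\in[b]^D$ the \emph{residue} and $\floor{\bfi/b}$ the \emph{block}. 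First I would record the elementary geometric fact driving everything: if $\bfi\neq\bfj$ are $b$-close in the $L_\infty$ model, then their residues differ, since $\abs{i_t-j_t}<b$ forces $i_t=j_t$ in any coordinate where $i_t\equiv j_t\pmod b$; moreover $\abs{\floor{i_t/b}-\floor{j_t/b}}\le 1$ for every $t$.

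The first two blocks are precisely the syndrome of a binary double-error-correcting BCH code: positions are the residues, mapped injectively to $[b^D]$ by $\bfi\mapsto[\bfi\bmod b]_b$ (Proposition~\ref{prop:baseq}(1)), and the choice $a=\ceil{\log_2(b^D+1)}$ guarantees $\ord(\beta)=2^a-1\ge b^D$, so distinct residues give distinct nonzero locators $\beta^{[\bfi\bmod b]_b}$. The next step is the standard case split: weight $0$ gives $S_1=0$; weight $1$ gives $S_1\neq0$ and $S_3=S_1^3$; and for weight $2$ the two locators $u\neq v$ are distinct by the fact above, so in characteristic $2$ one computes $S_1^3=(u+v)^3=u^3+v^3+uv(u+v)=S_3+uv\,S_1$, whence $S_1\neq0$ and $S_3\neq S_1^3$. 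Thus $(S_1,S_3)$ alone separates the three weight classes, and in the weight-$2$ case the error-locator polynomial recovers the \emph{unordered} residue pair $\set{\bfi\bmod b,\bfj\bmod b}$.

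It remains to pin down the blocks of the error positions. In the weight-$1$ case this is immediate: $A=\alpha^{[\bfi]_n}$ with $\ord(\alpha)=2^m-1\ge n^D$ (by $m=\ceil{\log_2(n^D+1)}$) and $\bfi\mapsto[\bfi]_n$ a bijection onto $[n^D]$, so $A$ determines $\bfi$. In the weight-$2$ case I would first use $P=\parenv*{\floor{\bfi/b}+\floor{\bfj/b}}\bmod 2$: since the blocks differ by at most $1$ per coordinate, $P_t$ reveals whether they are equal ($P_t=0$) or differ by one ($P_t=1$), and in the latter case comparing the now-known residues fixes the sign via ``$\floor{i_t/b}=\floor{j_t/b}+1$ iff $i_t\bmod b<j_t\bmod b$''. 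This determines $\bfdelta\eqdef\bfi-\bfj$ up to an overall sign, i.e.\ up to the arbitrary labelling of the residue pair. Finally, by $\Z$-linearity of $[\cdot]_n$ we get $A=\alpha^{[\bfj]_n}\parenv*{1+\alpha^{[\bfdelta]_n}}$; since $\bfdelta\neq\bfzero$, Proposition~\ref{prop:baseq}(2) gives $0<\abs{[\bfdelta]_n}\le n^D-1<2^m-1$, so $\alpha^{[\bfdelta]_n}\neq1$ and the factor $1+\alpha^{[\bfdelta]_n}$ is invertible; hence $\alpha^{[\bfj]_n}=A/\parenv*{1+\alpha^{[\bfdelta]_n}}$ determines $\bfj$, and then $\bfi=\bfj+\bfdelta$. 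Swapping the labelling replaces $\bfdelta$ by $-\bfdelta$ and returns the same unordered pair, so the error is recovered uniquely.

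The main obstacle is exactly the weight-$2$ analysis: the BCH block returns the residues only as an unordered set, and $A$ is an opaque sum of two unknown powers of $\alpha$. The crux—and where the $L_\infty$ geometry enters—is that $b$-closeness forces per-coordinate block differences in $\set{-1,0,1}$, so the single parity block $P$, read together with the residues, reconstructs the full difference vector $\bfi-\bfj$ and thereby linearizes $A$ into a single solvable power of $\alpha$. The remaining work is non-degeneracy bookkeeping (distinct residues, $S_3\neq S_1^3$, and $1+\alpha^{[\bfdelta]_n}\neq0$), all of which reduces to the bounds on $[\cdot]_b$ and $[\cdot]_n$ from Proposition~\ref{prop:baseq} together with the orders of $\alpha$ and $\beta$ forced by the definitions of $m$ and $a$.
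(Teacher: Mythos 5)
Your proposal is correct and follows essentially the same route as the paper's proof: a case analysis on $\wt(\bfe)$, using the $\beta$-blocks as a double-error-correcting BCH syndrome on the residues $\bfi\bmod b$, the parity block to resolve same-block versus adjacent-block per coordinate (hence the difference vector), and the $\alpha$-block to solve for the absolute position via the invertibility of $1+\alpha^{[\bfdelta]_n}$. The only cosmetic differences are that you solve directly for $\alpha^{[\bfj]_n}$ where the paper solves for $\alpha^{[b\bfz]_n}$, and you verify the case separation algebraically ($S_1^3=S_3+uvS_1$) where the paper cites the BCH code's correction capability.
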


\begin{proof}
We prove the claim by providing a decoding algorithm. Assume $\bfc\in C$ was transmitted, and $\bfy=\bfc+\bfe$ was received, where $\bfe\in E_\infty(b,N)$. The receiver computes the syndrome:
\[
\bfs^\T=H\bfy^T=H\bfe^T=\begin{pmatrix}
s_0 \\ s_1 \\ \bfs_2^\T \\ s_3
\end{pmatrix},
\]
where the components correspond to those of $H$ from~\eqref{eq:Hlinf}, i.e., $s_0,s_1\in\F_{2^a}$, $\bfs_2\in \F_2^D$, and $s_3\in\F_{2^m}$. We distinguish between three cases, depending on the weight of $\bfe$. We show the receiver can identify the case (by computing the syndrome), and correct the errors.

\textbf{Case 1:} $\wt(\bfe)=0$. In that case, no error occurred, and we have $\bfe=\bfzero$ as well as $\bfs=\bfzero$. The decoding procedure simply returns $\bfy$ as the decoded codeword.

\textbf{Case 2:} $\wt(\bfe)=1$. In that case a single error occurred, and $\bfe$ is the all-zero vector except for the error position, $\bfi\in[n]^D$, for which $e_{\bfi}=1$. In this case, $\bfs=\bfh_{\bfi}$, the column of $H$ in position $\bfi$. This is distinguishable from Case 1 since $\bfs\neq \bfzero$. Additionally, we note that $s_1=s_0^3$, which can never happen in Case 3, hence making the cases distinguishable.

We contend that all the columns of $H$ are distinct, by virtue of the bottom element, $\alpha^{[\bfi]_n}$. To see this, by Proposition~\ref{prop:baseq}, distinct $\bfi\in[n]^D$ result in distinct $[\bfi]_n$, and
\[
0 \leq [\bfi]_n \leq n^D-1 < 2^m-1= \ord(\alpha),
\]
where $\ord(\alpha)$ is the multiplicative order of $\alpha$. Hence, distinct $\bfi$ result in distinct $\alpha^{[\bfi]_n}$. Thus, by comparing $\bfs$ against the columns of $H$, the error position $\bfi$ is obtained by the receiver which proceeds to flip the bit in position $\bfi$ of $\bfy$ to decode correctly.

\textbf{Case 3:} $\wt(\bfe)=2$. In this case two errors occurred. We denote their positions as $\bfi,\bfj\in[n]^D$, $\bfi\neq\bfj$, and we have
\[
\bfs^\T = \begin{pmatrix}
s_0 \\ s_1 \\ \bfs_2^\T \\ s_3
\end{pmatrix}
=\begin{pmatrix}
\beta^{[\bfi \bmod b]_b}+\beta^{[\bfj \bmod b]_b} \\
\beta^{3[\bfi \bmod b]_b}+\beta^{3[\bfj \bmod b]_b} \\
(\floor{\bfi^\T / b}+\floor{\bfj^\T / b}) \bmod 2 \\
\alpha^{[\bfi]_n}+\alpha^{[\bfj]_n}
\end{pmatrix}.
\]

By the definition of the $L_\infty$ model, $\abs{i_t-j_t}<b$ for all $t\in[D]$. It follows that $\bfi$ and $\bfj$ are contained in a translated $b^{\times D}$ cube, i.e., $\bfi,\bfj\in [b]^D+\bfz$, for some $\bfz\in[n]^D$. We now observe that, by construction, the top two components of any column of $H$, in positions forming a $b^{\times D}$ cube, are
\[
\set*{\begin{pmatrix}
\beta^{[\bfl \bmod b]_b} \\
\beta^{3[\bfl \bmod b]_b}
\end{pmatrix}}_{\bfl\in [b]^D+\bfz}
=
\set*{
\begin{pmatrix}
\beta^t\\
\beta^{3t}
\end{pmatrix}
}_{t\in [b^D]}.
\]
These, however, are just $b^D$ distinct columns of a $2$-error-correcting binary primitive BCH code~\cite[Chapter 9]{MacSlo78}. Since this BCH code is capable of correcting two errors, as well as a single error, we have
\begin{align*}
\begin{pmatrix}
\beta^{[\bfi \bmod b]_b}+\beta^{[\bfj \bmod b]_b} \\
\beta^{3[\bfi \bmod b]_b}+\beta^{3[\bfj \bmod b]_b}
\end{pmatrix}
&\neq
\begin{pmatrix}
\beta^{[\bfl \bmod b]_b} \\
\beta^{3[\bfl \bmod b]_b}
\end{pmatrix},
&
\begin{pmatrix}
\beta^{[\bfi \bmod b]_b}+\beta^{[\bfj \bmod b]_b} \\
\beta^{3[\bfi \bmod b]_b}+\beta^{3[\bfj \bmod b]_b}
\end{pmatrix}
& \neq
\bfzero^\T,
\end{align*}
for all $\bfl$. Thus, the receiver can distinguish between Cases 1, 2, and 3. Additionally, the receiver knows $(\bfi\bmod b)$ and $(\bfj \bmod b)$ by applying the decoding procedure for the BCH code and reversing the $[\cdot]_b$ mapping.

The next step is determining $\bfDelta\eqdef \bfj-\bfi$. Consider a partition of $\Z$ into translates of $[b]$, i.e., sets of the form $tb+[b]$, $t\in\Z$. We call each such set a block. Now, for each $t\in[D]$, $i_t$ and $j_t$ might be in the same block, in which case,
\[\Delta_t=j_t-i_t = (j_t \bmod b) - (i_t \bmod b).\]
If $i_t$ and $j_t$ are not in the same block, they must be in adjacent blocks in order for $\abs{i_t-j_t}<b$ to hold. In that case, we must have $i_t\not\equiv j_t \pmod{b}$, and the only way possible is to have
\[
\Delta_t=j_t-i_t=\begin{cases}
(j_t\bmod b)-(i_t\bmod b)+b & \text{if $(j_t\bmod b)-(i_t\bmod b)<0$,}\\
(j_t\bmod b)-(i_t\bmod b)-b & \text{if $(j_t\bmod b)-(i_t\bmod b)>0$.}\\
\end{cases}
\]
Whether $i_t$ and $j_t$ are in the same block or in adjacent blocks, may be obtained by examining the third part of the syndrome,
\[
\bfs_2=(s_{2,0},s_{2,1},\dots,s_{2,D-1})=(\floor{\bfi^\T / b}+\floor{\bfj^\T / b}) \bmod 2.
\]
By construction, $i_t$ and $j_t$ are in the same block if and only if $s_{2,t}=0$, for all $t\in[D]$. We therefore summarize
\[
\Delta_t=\begin{cases}
(j_t\bmod b)-(i_t\bmod b) & \text{if $s_{2,t}=0$,}\\
(j_t\bmod b)-(i_t\bmod b)+b & \text{if $s_{2,t}=1$ and $(j_t\bmod b)-(i_t\bmod b)<0$,}\\
(j_t\bmod b)-(i_t\bmod b)-b & \text{if $s_{2,t}=1$ and $(j_t\bmod b)-(i_t\bmod b)>0$,}\\
\end{cases}
\]
for all $t\in[D]$.

At this point, the receiver knows that
\begin{align}
\label{eq:bfidelta}
\bfi &= b\bfz+(\bfi \bmod b) &
\bfj &= b\bfz+(\bfi \bmod b) + \bfDelta,
\end{align}
with the only unknown being $\bfz\in [\ceil{\frac{n}{b}}]^D$. Using the last part of the syndrome, we can now write
\begin{align*}
s_3 &= \alpha^{[\bfi]_n}+\alpha^{[\bfj]_n} = 
\alpha^{[b\bfz + (\bfi \bmod b)]_n} + \alpha^{[b\bfz + (\bfi \bmod b)+\bfDelta]_n} \\
&= \alpha^{[b\bfz]_n} \cdot \alpha^{[\bfi \bmod b]_n}\cdot\parenv*{1+\alpha^{[\bfDelta]_n}}
\end{align*}
We now observe that $\bfDelta=\bfi-\bfj\neq\bfzero$, since $\bfi\neq\bfj$. By definition, $\bfDelta\in\set{-(b-1),\dots,b-1}^D$. Thus, by Proposition~\ref{prop:baseq},
\[
\abs*{[\bfDelta]_n} \leq n^D-1 < 2^m-1= \ord(\alpha).
\]
Additionally, since $\bfDelta\neq\bfzero$, by Proposition~\ref{prop:baseq}, $[\bfDelta]_n\neq 0$. It now follows that
\[
1+\alpha^{[\bfDelta]_n}\neq 0.
\]
The receiver now needs to solve
\begin{equation}
\label{eq:bfz}
\alpha^{[b\bfz]_n}=\frac{s_3}{\alpha^{[\bfi\bmod b]_n}\parenv*{1+\alpha^{[\bfDelta]_n}}}.
\end{equation}
Since $\bfz\in[\ceil{\frac{n}{b}}]^D$, by Proposition~\ref{prop:baseq},
\[ 0\leq [b\bfz]_n \leq n^D-1<2^m-1=\ord(\alpha),\]
and so a unique $\bfz\in[\ceil{\frac{n}{b}}]^D$ exists to solve~\eqref{eq:bfz}. The receiver now knows $\bfz$, $(\bfi \bmod b)$, and $\bfDelta$, and so by~\eqref{eq:bfidelta}, the receiver knows the two positions in error, $\bfi$ and $\bfj$. It can then flip the bits in these two positions in $\bfy$ and correctly recover the transmitted codeword $\bfc$.
\end{proof}

\begin{corollary}
\label{cor:linfxi}
Let $C$ be the code from Construction~\ref{con:linf}. Then its excess redundancy is
\[
\xi(C)=\begin{cases}
2\ceil{\log_2(b^D+1)}+D+1 & \text{$n$ is a power of $2$,}\\
2\ceil{\log_2(b^D+1)}+D & \text{otherwise.}
\end{cases}
\]
\end{corollary}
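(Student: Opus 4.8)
The plan is to read $r(C)$ directly off the block structure of the column $\bfh_\bfi^\T$ in~\eqref{eq:Hlinf}, substitute into the definition $\xi(C)=r(C)-\ceil{\log_2 N}$, and reduce the whole statement to evaluating a single difference of ceiling functions. First I would count the binary parity-check rows contributed by each of the four blocks: the two BCH-type entries $\beta^{[\bfi\bmod b]_b}$ and $\beta^{3[\bfi\bmod b]_b}$ each lie in $\F_{2^a}$ and hence contribute $a=\ceil{\log_2(b^D+1)}$ rows apiece; the entry $\floor{\bfi^\T/b}\bmod 2$ contributes $D$ rows; and $\alpha^{[\bfi]_n}\in\F_{2^m}$ contributes $m=\ceil{\log_2(n^D+1)}$ rows. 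This gives
\[
r(C)=2a+D+m=2\ceil{\log_2(b^D+1)}+D+\ceil{\log_2(n^D+1)}.
\]
Since $N=n^D$, substituting into the definition isolates the only nonconstant term:
\[
\xi(C)=2\ceil{\log_2(b^D+1)}+D+\bracenv*{\ceil{\log_2(n^D+1)}-\ceil{\log_2(n^D)}},
\]
so the entire content of the corollary is the claim that the bracketed quantity $\delta\eqdef\ceil{\log_2(n^D+1)}-\ceil{\log_2(n^D)}$ equals $1$ when $n$ is a power of $2$ and $0$ otherwise.

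To finish, I would split on whether $M\eqdef n^D$ is a power of $2$, first noting that $n^D$ is a power of $2$ if and only if $n$ is (any odd prime factor of $n$ survives in $n^D$), so the stated case distinction on $n$ is the correct one. If $M=2^k$ (and here $k\geq 1$ since $n\geq 2$, $D\geq1$), then $\ceil{\log_2 M}=k$, while $2^k<M+1\leq 2^{k+1}$ forces $\ceil{\log_2(M+1)}=k+1$, giving $\delta=1$. If $M$ is not a power of $2$, choose $k$ with $2^k<M<2^{k+1}$; then $\ceil{\log_2 M}=k+1$, and because $M$ is an integer we have $M+1\leq 2^{k+1}$ together with $M+1>2^k$, so $\ceil{\log_2(M+1)}=k+1$ as well, giving $\delta=0$. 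Substituting these two values of $\delta$ reproduces the two branches of the claimed formula.

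The computation itself is elementary; the only real points of care are the boundary behaviour of the ceilings (the inequality $M+1\leq 2^{k+1}$ in the second case genuinely uses that $M$ is an integer) and one bookkeeping subtlety in the first step. Since $r(C)=N-k=\operatorname{rank}(H)$, taking $r(C)$ to equal the row count $2a+D+m$ presupposes that the rows of $H$ are linearly independent; the primitivity of $\alpha$ already forces the bottom block to have full rank $m$, so the substantive point — and where I would expect the only genuine obstacle to hide — is verifying that the $2a+D$ upper rows contribute independent directions beyond those. Absent such a verification the formula should be read with $r(C)$ taken as the number of parity checks, which is the convention these excess-redundancy comparisons implicitly use.
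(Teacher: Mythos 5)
Your proof is correct and takes essentially the same route as the paper's: read $r(C)=2a+D+m$ off the block structure of $H$, substitute into $\xi(C)=r(C)-\ceil{\log_2 n^D}$, and observe that $\ceil{\log_2(n^D+1)}-\ceil{\log_2 n^D}$ is $1$ or $0$ according to whether $n$ is a power of $2$. Your explicit case analysis of that ceiling difference, and your caveat that the equality presumes $H$ has full row rank, only spell out details the paper leaves implicit (the paper uses the same row-count convention for $r(C)$ without comment).
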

\begin{proof}
In the notation of Construction~\ref{con:linf}, by definition,
\begin{align*}
\xi(C) &= 2a+D+m-\ceil*{\log_2(n^D)} \\
&= 2\ceil*{\log_2(b^D+1)}+D+\ceil*{\log_2(n^D+1)}-\ceil*{\log_2(n^D)},
\end{align*}
and the difference between the last two terms is $0$ or $1$ depending on whether $n$ is a power of $2$, thus proving the claim.
\end{proof}

In certain cases, we can improve upon Construction~\ref{con:linf} by extending the length of the code from $n^{\times D}$ arrays to $(bn)^{\times D}$ arrays. By increasing the length, we reduce the excess redundancy of the code. The construction remains essentially the same, as does its proof of correctness.

\begin{construction}
\label{con:linfext}
Let $n\geq b\geq 2$, and $D\geq 1$ be integers. Define
\begin{align}
\label{eq:linfm}
m & \eqdef \ceil*{\log_2 (n^D+1)} & a &\eqdef \ceil*{\log_2 (b^D+1)}.
\end{align}
Assume $\gcd(b,2^m-1)=1$. Let $\alpha\in \F_{2^m}$ and $\beta\in \F_{2^a}$ be primitive elements in their respective fields. Let $H$ be a matrix, whose columns are indexed by $[bn]^D$, and whose $\bfi$-th column is
\[
\bfh^\T_\bfi = \begin{pmatrix}
\beta^{[\bfi \bmod b]_b} \\
\beta^{3[\bfi \bmod b]_b} \\
\floor{\bfi^\T / b} \bmod 2 \\
\alpha^{[\bfi]_n}
\end{pmatrix}.
\]
We construct the $D$-dimensional binary code $C$, whose parity-check matrix is $H$.
\end{construction}

\begin{theorem}
\label{th:linfext}
Let $C$ be the code with parity-check matrix $H$ from Construction~\ref{con:linfext}. Then $C$ is an $[N=(bn)^D,k]$ $D$-dimensional code of length $(bn)^{\times D}$, capable of correcting a single $2$-weight-limited $b$-burst in the $L_\infty$ model.
\end{theorem}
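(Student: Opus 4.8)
The plan is to follow the three-case decoding argument in the proof of Theorem~\ref{th:linf} essentially verbatim, since the column structure of $H$ is unchanged; the enlargement of the index set from $[n]^D$ to $[bn]^D$ affects the argument in exactly two places, and in both of them the new hypothesis $\gcd(b,2^m-1)=1$ is precisely what rescues it. I would keep the separation of Cases~1, 2, and~3 untouched: the zero-error case is trivial, and a single error still yields $\bfs=\bfh_\bfi$ with $s_1=s_0^3$, which the BCH argument on the top two entries rules out for two errors, so the three cases remain mutually distinguishable exactly as before.

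The first genuinely new point is the distinctness of the columns in the single-error case. For $\bfi\in[bn]^D$ the integer $[\bfi]_n$ may exceed $\ord(\alpha)=2^m-1$, so the bottom entry $\alpha^{[\bfi]_n}$ no longer separates the columns by itself. Instead I would argue jointly: suppose $\bfh_\bfi=\bfh_\bfj$. The top BCH entry $\beta^{[\bfi\bmod b]_b}$ forces $\bfi\equiv\bfj\pmod b$, using that $[\bfi\bmod b]_b\in[b^D]$ together with $\ord(\beta)=2^a-1\geq b^D$ and Proposition~\ref{prop:baseq}; write $\bfi=b\bfu+\bfr$ and $\bfj=b\bfv+\bfr$ with $\bfu,\bfv\in[n]^D$ and $\bfr=\bfi\bmod b$. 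Then $[\bfi]_n-[\bfj]_n=b\parenv*{[\bfu]_n-[\bfv]_n}$, and equality of the $\alpha$-entries gives $b\parenv*{[\bfu]_n-[\bfv]_n}\equiv 0\pmod{2^m-1}$. Since $\gcd(b,2^m-1)=1$, this forces $[\bfu]_n\equiv[\bfv]_n\pmod{2^m-1}$; as $\abs*{[\bfu]_n-[\bfv]_n}\leq n^D-1<2^m-1$ by Proposition~\ref{prop:baseq}, I conclude $[\bfu]_n=[\bfv]_n$, hence $\bfu=\bfv$ and $\bfi=\bfj$.

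For the two-error case I would reuse, essentially word for word, the BCH step recovering $(\bfi\bmod b)$ and $(\bfj\bmod b)$, the block/parity step recovering $\bfDelta=\bfj-\bfi$ from $\bfs_2$, and the factorization $s_3=\alpha^{[b\bfz]_n}\alpha^{[\bfi\bmod b]_n}\parenv*{1+\alpha^{[\bfDelta]_n}}$, together with the proof that $1+\alpha^{[\bfDelta]_n}\neq 0$ (which depends only on $\bfDelta\in\set{-(b-1),\dots,b-1}^D\setminus\set{\bfzero}$ and is unaffected by the larger index set). The single modification is the final step: here $\bfz=\floor{\bfi/b}\in[n]^D$ rather than $[\ceil{n/b}]^D$, and since $[b\bfz]_n=b[\bfz]_n$, the equation to solve is $(\alpha^b)^{[\bfz]_n}=s_3/\parenv*{\alpha^{[\bfi\bmod b]_n}\parenv*{1+\alpha^{[\bfDelta]_n}}}$. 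The coprimality $\gcd(b,2^m-1)=1$ guarantees $\alpha^b$ is again a primitive element of order $2^m-1$, and since $0\leq[\bfz]_n\leq n^D-1<2^m-1$ the map $\bfz\mapsto(\alpha^b)^{[\bfz]_n}$ is injective on $[n]^D$, so a unique $\bfz$ is recovered, completing the decoding. The main obstacle is exactly this loss of injectivity of $\bfi\mapsto\alpha^{[\bfi]_n}$ once $\bfi$ ranges over $[bn]^D$; the whole content of the theorem is that the coprimality hypothesis restores injectivity after factoring out $b$, both for column distinctness and for solving for the block index, while everything else is identical to Theorem~\ref{th:linf}.
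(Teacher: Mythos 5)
Your proposal is correct and follows essentially the same route as the paper's proof: identical case structure, the same column-distinctness argument in Case~2 (your cancellation of $b$ in the congruence modulo $2^m-1$ is just a rephrasing of the paper's observation that $\gcd(b,2^m-1)=1$ makes $\alpha^b$ primitive), and the same final step solving $(\alpha^b)^{[\bfz]_n}=s_3/\bigl(\alpha^{[\bfi\bmod b]_n}(1+\alpha^{[\bfDelta]_n})\bigr)$ with $\bfz\in[n]^D$. You also correctly identify that these are the only two places where the enlarged index set $[bn]^D$ matters, which is exactly how the paper presents its proof.
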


\begin{proof}
The proof proceeds almost exactly as that of Theorem~\ref{th:linf}, and so we highlight only the subtle differences.

In Case 2, we need to show that the columns of $H$ are all distinct. Assume $\bfh^\T_{\bfi}=\bfh^\T_{\bfj}$, for some $\bfi,\bfj\in[bn]^D$. By the top element of each column, we have
\[
\beta^{[\bfi \bmod b]_b}=\beta^{[\bfj \bmod b]_b}.
\]
Since, by Proposition~\ref{prop:baseq},
\[ 0\leq [\bfi \bmod b]_b, [\bfj\bmod b]_b \leq b^D-1 < 2^a-1=\ord(\beta),\]
we must therefore have
\[ \bfi \bmod b = \bfj \bmod b,\]
implying that $\bfj = \bfi + b\bfz$ for some $\bfz\in\set{-(n-1),\dots,n-1}^D$. Moving on to check the bottom element of each column, we now have
\[
\alpha^{[\bfi]_n}=\alpha^{[\bfj]_n}=\alpha^{[\bfi]_n+b[\bfz]_n},
\]
and so
\begin{equation}
\label{eq:alphabz}
(\alpha^b)^{[\bfz]_n}=1.
\end{equation}
Since $\gcd(b,2^m-1)=1$, $\alpha^b$ is also a primitive element in $\F_{2^m}$. By Proposition~\ref{prop:baseq},
\[
\abs*{[\bfz]_n}\leq n^D-1 < 2^m-1= \ord(\alpha)=\ord(\alpha^b).
\]
Thus, the only way for~\eqref{eq:alphabz} to hold is to have $\bfz=\bfzero$, and then $\bfi=\bfj$, proving our claim that the columns of $H$ are distinct.

In Case 3, when reaching~\eqref{eq:bfz}, we get
\[
\alpha^{[b\bfz]_n}=(\alpha^b)^{[\bfz]_n}=\frac{s_3}{\alpha^{[\bfi\bmod b]_n}\parenv*{1+\alpha^{[\bfDelta]_n}}},
\]
with $\bfz\in[n]^D$, and we need to show there is a unique $\bfz$ solving it. Recall that $\alpha^b$ is also primitive. By Proposition~\ref{prop:baseq},
\[
0\leq [\bfz]_n\leq n^D-1 < 2^m-1= \ord(\alpha^b),
\]
and so $\bfz$ is unique.
\end{proof}

\begin{corollary}
\label{cor:linfxiext}
Let $C$ by the code from Construction~\ref{con:linfext}. Then its excess redundancy satisfies
\[
\xi(C) \leq D\log_2 (2b) + 3.
\]
\end{corollary}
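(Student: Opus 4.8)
The plan is to compute the redundancy $r(C)$ of Construction~\ref{con:linfext} directly from the structure of $H$, and then bound the excess redundancy $\xi(C) = r(C) - \ceil{\log_2 N}$ by estimating each term. First I would read off the number of rows in $H$: the four blocks in the column description contribute $a$, $a$, $D$, and $m$ binary rows respectively (recalling that an $\F_{2^a}$ entry is shorthand for $a$ binary rows, and similarly for $\F_{2^m}$). Hence $r(C) = 2a + D + m$, exactly as in the proof of Corollary~\ref{cor:linfxi}, and $N = (bn)^D$ so $\ceil{\log_2 N} = \ceil{D\log_2(bn)}$.

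Next I would substitute the definitions $m = \ceil{\log_2(n^D+1)}$ and $a = \ceil{\log_2(b^D+1)}$ and bound each ceiling from above by its argument plus $1$. Concretely, $a = \ceil{\log_2(b^D+1)} \leq \log_2(b^D+1)+1$, which I would further simplify using $b^D+1 \leq 2b^D$ (valid since $b^D \geq 1$) to get $a \leq D\log_2 b + 2$; similarly $m \leq \log_2(n^D+1)+1 \leq D\log_2 n + 2$. For the subtracted term I would use the lower bound $\ceil{D\log_2(bn)} \geq D\log_2(bn) = D\log_2 b + D\log_2 n$. Assembling these:
\[
\xi(C) = 2a + D + m - \ceil{D\log_2(bn)} \leq 2(D\log_2 b + 2) + D + (D\log_2 n + 2) - D\log_2 b - D\log_2 n.
\]

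The $D\log_2 n$ terms cancel, leaving $D\log_2 b + D + 6$ after collecting constants, which is weaker than the claimed $D\log_2(2b)+3 = D\log_2 b + D + 3$. So the naive "plus one per ceiling" slack is too lossy by $3$, and the main obstacle is to tighten the estimates to recover the claimed constant. The plan to fix this is to avoid inflating $b^D+1$ to $2b^D$ and $n^D+1$ to $2(n^D)$ simultaneously, and instead keep the ceilings intact and exploit the exact cancellation against $\ceil{D\log_2(bn)}$. The key identity to chase is that $\ceil{\log_2(n^D+1)} - \ceil{D\log_2(bn)}$ can be controlled by writing $D\log_2(bn) = D\log_2 b + \log_2(n^D)$ and comparing $\ceil{\log_2(n^D+1)}$ against $\ceil{\log_2(n^D)} = \ceil{D\log_2 n}$, whose difference is at most $1$ (as established in Corollary~\ref{cor:linfxi}). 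This reduces the problem to bounding $2\ceil{\log_2(b^D+1)} + D + 1 - D\log_2 b$ by $D\log_2(2b)+3 = D\log_2 b + D + 3$, i.e.\ showing $2\ceil{\log_2(b^D+1)} \leq 2D\log_2 b + 2$, which follows from $\ceil{\log_2(b^D+1)} \leq D\log_2 b + 1$ provided one checks this single ceiling bound carefully (it holds because $b^D + 1 \leq 2 b^D$ gives $\log_2(b^D+1) \leq D\log_2 b + 1$, and then the ceiling of a quantity at most $D\log_2 b + 1$ need not itself exceed $D\log_2 b + 1$ only when $D\log_2 b$ is an integer—so the genuinely delicate point is handling the ceiling of $D\log_2 b + 1$ when $b$ is not a power of $2$). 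I expect this last ceiling manipulation to be where the real care is needed, and I would verify it by splitting on whether $b$ is a power of $2$ or using the sharper inequality $\ceil{\log_2(b^D+1)} \leq \ceil{D\log_2 b} + \text{(small correction)}$ together with $\ceil{D\log_2 b} \leq D\log_2 b + 1$.
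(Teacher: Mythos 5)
Your starting point is the same as the paper's: $\xi(C)=2a+D+m-\ceil{\log_2\parenv*{(bn)^D}}$ with $a=\ceil{\log_2(b^D+1)}$ and $m=\ceil{\log_2(n^D+1)}$, and you correctly diagnose that the naive ``argument plus one per ceiling'' estimate is too lossy. But your proposed repair does not go through as written. After you replace $m$ by $\ceil{\log_2 n^D}+1=\ceil{D\log_2 n}+1$ (``difference at most $1$''), the reduction you claim --- that $\xi(C)\leq 2\ceil{\log_2(b^D+1)}+D+1-D\log_2 b$ --- still requires the cancellation step
\[
\ceil{D\log_2 b + D\log_2 n} \;\geq\; D\log_2 b + \ceil{D\log_2 n},
\]
i.e.\ super-additivity of the ceiling with both summands non-integer. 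This is false in general: take $D=1$, $b=3$, $n=5$, a legitimate parameter set for Construction~\ref{con:linfext} (here $m=\ceil{\log_2 6}=3$ and $\gcd(3,2^3-1)=1$). Then the left side is $\ceil{\log_2 15}=4$ while the right side is $\log_2 3+\ceil{\log_2 5}\approx 4.585$. The weaker fact you actually invoke --- only that $m-\ceil{D\log_2 n}\leq 1$ --- cannot by itself recover the constant $3$: combined with the monotone bounds $\ceil{D\log_2 n}\leq D\log_2 n+1$ and $\ceil{D\log_2(bn)}\geq D\log_2(bn)$ it only yields $\xi(C)\leq D\log_2(2b)+4$. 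To close the gap along your route you would need the full dichotomy behind Corollary~\ref{cor:linfxi}: the difference equals $1$ \emph{only when} $n$ is a power of $2$, in which case $D\log_2 n$ is an integer and the ceiling of the sum genuinely splits; your write-up never uses this.

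The irony is that the step you flag as ``where the real care is needed'' is both true and, once stated correctly, the entire proof. For every positive integer $x$, $\ceil{\log_2(x+1)}$ is the least integer $k$ with $2^k\geq x+1$, equivalently $2^k>x$, hence
\[
\ceil{\log_2(x+1)}=\floor{\log_2 x}+1\leq \log_2 x+1,
\]
with no case split on powers of $2$. The paper's proof simply applies this sharp inequality to \emph{both} ceilings, giving $a\leq D\log_2 b+1$ and $m\leq D\log_2 n+1$, and bounds the subtracted term by $\ceil{\log_2\parenv*{(bn)^D}}\geq D\log_2 b+D\log_2 n$; everything cancels and one gets $D\log_2 b+D+3=D\log_2(2b)+3$ in one line. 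Your detour --- applying the sharp bound only to $a$ while routing $m$ through $\ceil{D\log_2 n}$ and a ceiling cancellation --- is exactly what introduces the false step; apply the same inequality to $m$ and the cancellation scheme becomes unnecessary.
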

\begin{proof}
By Construction,
\[
\xi(C) = 2\ceil*{\log_2(b^D+1)}+D+\ceil*{\log_2(n^D+1)}-\ceil*{\log_2\parenv*{(bn)^D}}.
\]
The claim follows by using the simple $\ceil{\log_2(x+1)}\leq \log_2 x + 1$ for integer $x$, as well as $x \leq \ceil{x}\leq x+1$ for all real $x$.
\end{proof}

In order for us to assess the efficiency of our constructions, we provide a lower bound on the excess redundancy for all codes in this model.

\begin{theorem}
\label{th:linfxilower}
Fix integers $D\geq 1$, and $b\geq 2$. Let $C$ be an $[N=n^D,k]$ $D$-dimensional code of length $n^{\times D}$, $n\geq (2b-1)^{D-1}D(b^2-b)$, that is capable of correcting a single $2$-weight-limited $b$-burst in the $L_\infty$ model. Then its excess redundancy is lower bounded by
\[
\xi(C) \geq D\log_2(2b-1)-2.
\]
\end{theorem}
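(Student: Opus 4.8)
The plan is to combine the ball-packing bound \eqref{eq:rlowbound} with a count of the admissible error patterns, $\abs{E_\infty(b,N)}$. Since $\xi(C)=r(C)-\ceil{\log_2 N}$ and $r(C)\ge \log_2\abs{E_\infty(b,N)}$, it suffices to lower bound $\log_2\abs{E_\infty(b,N)}-\ceil{\log_2 N}$. The first step is therefore to count the bursts. There is exactly one burst of weight $0$ and $N=n^D$ of weight $1$, so the crux is counting the weight-$2$ bursts, i.e., unordered pairs $\set{\bfi,\bfj}$, $\bfi\ne\bfj$, with $\max_{t\in[D]}\abs{i_t-j_t}\le b-1$. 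I would count these through their displacement $\bfDelta=\bfj-\bfi\in\set{-(b-1),\dots,b-1}^D$: for a fixed nonzero $\bfDelta$ the number of ordered pairs realizing it is $\prod_{t\in[D]}(n-\abs{\Delta_t})$, where $n\ge b$ guarantees every factor is positive. Because this product is separable, summing over \emph{all} $\bfDelta$ with $\norm{\bfDelta}_\infty\le b-1$ factors as $\prod_{t\in[D]}\sum_{k=-(b-1)}^{b-1}(n-\abs{k})=X^D$, where $X\eqdef (2b-1)n-(b^2-b)$. Subtracting the $\bfDelta=\bfzero$ term $n^D$ and halving for unordered pairs gives $\tfrac12(X^D-n^D)$ weight-$2$ bursts, hence
\[
\abs{E_\infty(b,N)} = 1+\tfrac12 n^D+\tfrac12 X^D \ge \tfrac12\parenv*{n^D+X^D}.
\]

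The second step is to convert this count into the claimed bound while tracking the two sources of loss: the rounding in $\ceil{\log_2 N}$ and the correction $b^2-b$ hidden in $X$. Factoring out $n^D$ gives $\log_2\abs{E_\infty(b,N)}\ge -1+D\log_2 n+\log_2\parenv*{1+(X/n)^D}$. Using $\ceil{\log_2 N}=\ceil{D\log_2 n}\le D\log_2 n+1$, the $D\log_2 n$ terms cancel and I am left with
\[
\xi(C) \ge -2+\log_2\parenv*{1+(X/n)^D}.
\]
It therefore remains to prove $1+(X/n)^D\ge (2b-1)^D$, equivalently $(X/n)^D\ge (2b-1)^D-1$.

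The final step—and the only place the hypothesis on $n$ enters—is this last inequality. Writing $X/n=(2b-1)\parenv*{1-\tfrac{b^2-b}{(2b-1)n}}$ and applying Bernoulli's inequality $(1-x)^D\ge 1-Dx$ with $x=\tfrac{b^2-b}{(2b-1)n}\in[0,1]$ yields
\[
(X/n)^D \ge (2b-1)^D - \frac{D(b^2-b)(2b-1)^{D-1}}{n}.
\]
The assumption $n\ge (2b-1)^{D-1}D(b^2-b)$ makes the subtracted term at most $1$, so $(X/n)^D\ge (2b-1)^D-1$, which closes the argument and gives $\xi(C)\ge D\log_2(2b-1)-2$. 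I expect the main obstacle to be exactly this bookkeeping: the naive estimates $\abs{E_\infty(b,N)}\ge\tfrac12 X^D$ and $\ceil{\log_2 N}\le \log_2 N+1$ each discard a constant and together undershoot the target by precisely $D\log_2\parenv*{1-(b^2-b)/((2b-1)n)}$, so one must \emph{retain} the $+\tfrac12 n^D$ term in the count and postpone all approximation to the single Bernoulli step, which is then tight enough to reproduce the threshold on $n$ in the statement.
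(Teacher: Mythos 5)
Your proposal is correct and follows essentially the same route as the paper: the ball-packing bound \eqref{eq:rlowbound}, the exact count $\abs{E_\infty(b,N)}=1+n^D+\tfrac12\parenv*{X^D-n^D}$ with $X=(2b-1)n-(b^2-b)$, a single application of Bernoulli's inequality, and the hypothesis on $n$ used to absorb a deficit of at most $1$ into the retained $n^D$ term. The only cosmetic difference is that you count weight-$2$ bursts by summing over displacements $\bfDelta$ (which factors cleanly) rather than by the paper's endpoint-by-endpoint boundary analysis, but both yield the identical closed form.
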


\begin{proof}
By the ball-packing argument of~\eqref{eq:rlowbound}
\[
\xi(C) \geq \log_2 \abs*{E_\infty(b,N)}-\ceil*{\log_2 N}.
\]
To lower bound $\abs{E_\infty(b,N)}$, we observe that this set contains the unique all-zero array, $N=n^D$ errors of weight $1$, as well error patterns of weight $2$.

To count the latter, we first choose position $\bfi$ for the first error, and then position $\bfj$ for the second error. For each $\ell\in D$ we choose $i_\ell$ and then $j_\ell$. If we choose $b-1\leq i_\ell \leq n-b$, there are $2b-1$ ways of choosing $j_\ell$ such that $\abs{i_\ell-j_\ell}\leq b-1$. If $0\leq i_\ell\leq b-2$, there are $b+i_\ell$ for choosing $j_\ell$, and a similar counting holds for $n-b+1\leq i_\ell\leq n-1$. We repeat this for each $\ell\in[D]$. From this counting we need to subtract $n^D$ cases in which we end up choosing $\bfi=\bfj$, and then divide by $2$ since we can switch the order of $\bfi$ and $\bfj$. Thus, in total
\begin{align*}
\abs*{E_\infty(b,N)} &= 1+n^D+\frac{1}{2}\parenv*{\parenv*{(n-2b+2)(2b-1)+2\sum_{s=0}^{b-2}(b+s)}^D-n^D} \\
&= 1 + n^D+\frac{1}{2}\parenv*{(2nb-n-b^2+b)^D-n^D}\\
&\geq \frac{1}{2}\parenv*{(2nb-n-b^2+b)^D+n^D} \\
&= \frac{1}{2}n^D\parenv*{(2b-1)^D\parenv*{1-\frac{b^2-b}{n(2b-1)}}^D+1} \\
&\overset{(a)}{\geq} \frac{1}{2}n^D\parenv*{(2b-1)^D\parenv*{1-D\frac{b^2-b}{n(2b-1)}}+1}\\
&= \frac{1}{2}n^D\parenv*{(2b-1)^D-(2b-1)^{D-1}D\frac{b^2-b}{n}+1} \\
&\overset{(b)}{\geq} \frac{1}{2}n^D(2b-1)^D,
\end{align*}
where $(a)$ follows from Bernoulli's Inequality,
\[
(1+x)^r \geq 1+rx, \qquad \text{for all real $x\geq -1$ and $r\in\N$,}
\]
and $(b)$ follows for all $n\geq (2b-1)^{D-1}D(b^2-b)$.

Plugging this into the bound on $\xi(C)$ we get
\[
\xi(C) \geq \log_2\parenv*{\frac{1}{2}n^D(2b-1)^D} - \ceil*{\log_2 n^D} \geq D\log_2(2b-1)-2,
\]
by using $\ceil{x}\leq x+1$.
\end{proof}

By comparing the lower bound on the excess redundancy provided by Theorem~\ref{th:linfxilower} and the excess redundancy of the constructed codes in Corollary~\ref{cor:linfxi} and Corollary~\ref{cor:linfxiext}, we see that the codes have redundancy that very closely matches the lower bound. As a final comment for this section, we mention that when $b$ is a power of $2$, we can further reduce the code redundancy by one bit. This is accomplished by slightly redesigning the parity-check matrix, replacing the two-error-correcting BCH code, with an extended two-error-correcting BCH code. In that case, we can define the columns of the parity-check matrix to be
\[
\bfh^\T_\bfi = \begin{pmatrix}
1 \\
\beta_{[\bfi \bmod b]_b} \\
\beta_{[\bfi \bmod b]_b}^3 \\
\floor{\bfi^\T / b} \bmod 2 \\
\alpha^{[\bfi]_n}
\end{pmatrix}.
\]
where $\beta_0,\dots,\beta_{2^a-1}$ are the $2^a$ elements of $\F_{2^a}$, $a=\log_2 b^D=D\log_2 b$ (compare this to the original constructions' requirement of $a=\ceil{\log_2 (b^D+1)}$). The proofs for correctness are the same, and we omit them.

%%%%%%%%%%%%%%%%%%%%%%%%%%%%%%%%%%%%%%%%%%%%%%%%%%%%%%%%%%%%%%%%
%%%%%%%%%%%%%%%%%%%%%%%%%%%%%%%%%%%%%%%%%%%%%%%%%%%%%%%%%%%%%%%%
%%%%%%%%%%%%%%%%%%%%%%%%%%%%%%%%%%%%%%%%%%%%%%%%%%%%%%%%%%%%%%%%
\section{The $L_1$ Model}
\label{sec:l1}
%%%%%%%%%%%%%%%%%%%%%%%%%%%%%%%%%%%%%%%%%%%%%%%%%%%%%%%%%%%%%%%%
%%%%%%%%%%%%%%%%%%%%%%%%%%%%%%%%%%%%%%%%%%%%%%%%%%%%%%%%%%%%%%%%
%%%%%%%%%%%%%%%%%%%%%%%%%%%%%%%%%%%%%%%%%%%%%%%%%%%%%%%%%%%%%%%%

The next model we study is the $L_1$ model. The solution we propose for this model is quite involved. In particular, the top two layers of the parity-check matrix form a Lee-metric code.

We first recall the definition of the Lee metric. Let $p$ be a prime, and consider the finite field $\F_p=\set{0,1,\dots,p-1}$. For each element in $\alpha\in\F_p$ we define the Lee value as
\[
\abs*{\alpha}_L = \min\set{\alpha,p-\alpha}\in \Z.
\]
Then, for a vector $\bfa\in\F_p^n$, we define the Lee weight as
\[
\wt_L(\bfa) = \sum_{i\in[n]} \abs{a_i}_L,
\]
where the summation is over the integers. Finally, the Lee distance between two vectors, $\bfa,\bfb\in\F_p^n$ is simply
\[
d_L(\bfa,\bfb)=\wt_L(\bfa-\bfb).
\]
For our construction, we need the following result first.

\begin{lemma}[{{\cite{RotSie94}}}]
\label{lem:bch}
Let $p,b,n,s$ be positive integers, where $p$ is a prime, $p\geq 2b+1$, and $p^s\geq n+1$. Let $C$ be a BCH code of length $n$ over $\F_p$ with parity-check matrix over $\F_{p^s}$
\[
H = \begin{pmatrix}
1 & 1 & \dots & 1 \\
\alpha_0 & \alpha_1 & \dots & \alpha_{n-1} \\
\alpha_0^2 & \alpha_1^2 & \dots & \alpha_{n-1}^2 \\
\vdots & \vdots & & \vdots \\
\alpha_0^{b-1} & \alpha_1^{b-1} & \dots & \alpha_{n-1}^{b-1}
\end{pmatrix},
\]
where $\alpha_0,\dots,\alpha_{n-1}\in\F_{p^s}$ are distinct code roots, with at least $b$ of them being consecutive roots. Then $C$ has Lee distance at least $2b$, and can therefore correct $b-1$ errors in the Lee metric. Additionally, the redundancy of $C$ satisfies
\[
r(C) \leq 1 + (b-1)s.
\]
\end{lemma}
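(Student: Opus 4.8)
The statement bundles two claims: the redundancy bound $r(C)\le 1+(b-1)s$ and the Lee-distance bound. The plan is to dispatch the redundancy count quickly and then spend the real effort on the distance bound, which is the heart of the lemma. For the redundancy I would simply count $\F_p$-linear constraints cutting out $C=\ker(H)$. The all-ones top row already has entries in $\F_p$, so with respect to a fixed $\F_p$-basis of $\F_{p^s}$ containing $1$ it contributes a single $\F_p$-equation; each of the remaining $b-1$ rows $(\alpha_i^k)_i$ with $1\le k\le b-1$ has entries in $\F_{p^s}$ and hence expands into at most $s$ equations over $\F_p$. Thus $C$ is defined by at most $1+(b-1)s$ equations over $\F_p$, giving $r(C)=\dim_{\F_p}\rowsp(H)\le 1+(b-1)s$.

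For the distance bound, the plan is to show that every nonzero $\bfc\in C$ satisfies $\wt_L(\bfc)\ge 2b$; the error-correction claim then follows, since two Lee-error patterns of weight at most $b-1$ with equal syndromes would have a difference lying in $C$ of Lee weight at most $2b-2<2b$. Fix a nonzero $\bfc$, write each coordinate by its balanced representative $\bar c_i\in\{-(p-1)/2,\dots,(p-1)/2\}$ so that $w\eqdef\wt_L(\bfc)=\sum_i\abs{\bar c_i}$, and set $w_+=\sum_{\bar c_i>0}\bar c_i$ and $w_-=\sum_{\bar c_i<0}\abs{\bar c_i}$. Since $\bar c_i\equiv c_i\pmod p$ and $\F_{p^s}$ has characteristic $p$, the vanishing syndrome gives the power-sum identities $P_k\eqdef\sum_i\bar c_i\alpha_i^k=0$ in $\F_{p^s}$ for the $b$ exponents $k=0,1,\dots,b-1$. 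I would first exploit the $k=0$ (all-ones) row: it yields $\sum_i\bar c_i\equiv 0\pmod p$, and assuming for contradiction that $w<2b\le p-1$, the integer $\sum_i\bar c_i$ has absolute value at most $w<p$, so it vanishes exactly. This forces $w_+=w_-$, whence $w=2w_+$.

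The core of the argument is an error-locator-type rational generating function. Using that the code roots $\alpha_i$ are nonzero and distinct, I would set $\sigma(x)\eqdef\prod_i(1-\alpha_i x)^{\bar c_i}=N(x)/W(x)$, where $N(x)=\prod_{\bar c_i>0}(1-\alpha_i x)^{\bar c_i}$ and $W(x)=\prod_{\bar c_i<0}(1-\alpha_i x)^{\abs{\bar c_i}}$ are coprime polynomials with constant term $1$ and degrees $w_+$ and $w_-$. Its formal logarithmic derivative is $\sigma'(x)/\sigma(x)=-\sum_{m\ge 0}P_{m+1}x^m$, so $P_1=\dots=P_{b-1}=0$ says this series has valuation at least $b-1$; as $\sigma(0)=1$ is a unit, $\sigma'$ has valuation at least $b-1$, i.e.\ the coefficient of $x^{k-1}$ in $\sigma'$, which equals $k$ times the $x^k$-coefficient of $\sigma$, vanishes for $1\le k\le b-1$. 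Because $p\ge 2b+1>b-1$, these integers $k$ are invertible in $\F_{p^s}$, so the $x^k$-coefficients of $\sigma$ vanish for $1\le k\le b-1$ and $\sigma\equiv 1\pmod{x^b}$, i.e.\ $N\equiv W\pmod{x^b}$. Then $x^b\mid(N-W)$ while $\deg(N-W)\le\max(w_+,w_-)=w_+=w/2\le b-1<b$, forcing $N-W=0$. But coprimality with constant term $1$ makes $N=W$ possible only if $N=W=1$, i.e.\ $\bfc=\bfzero$, a contradiction. Hence $w\ge 2b$.

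The step I expect to be the main obstacle, and the one needing the most care, is the passage in positive characteristic from the vanishing of the logarithmic derivative to $\sigma\equiv 1\pmod{x^b}$: it divides by the integers $1,2,\dots,b-1$, and this is exactly what $p\ge 2b+1$ secures. That same hypothesis underlies the balanced-representative argument for $\sum_i\bar c_i=0$, since it keeps $w<p$. The remaining items are routine bookkeeping that I would verify explicitly: that the $\alpha_i$ are nonzero and distinct so $N,W$ are genuinely coprime of degrees $w_+,w_-$; that $\abs{\bar c_i}=\abs{c_i}_L$ so indeed $\wt_L(\bfc)=\sum_i\abs{\bar c_i}$; and that choosing an $\F_p$-basis of $\F_{p^s}$ through $1$ is what lets the top row contribute only one constraint in the redundancy count.
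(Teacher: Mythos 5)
The paper never proves this lemma---it is imported, statement and all, from the cited reference \cite{RotSie94}---so there is no internal proof to compare against; your argument is correct and is essentially the original Roth--Siegel proof (redundancy count over an $\F_p$-basis, the balanced-representative step forcing $w_+=w_-$, and the logarithmic-derivative/Newton-identity argument with the coprime factorization $\sigma=N/W$ of bounded degrees). The one hypothesis you lean on that deserves an explicit word is that the code roots $\alpha_i$ are nonzero---which holds here, since in the BCH setting they are roots of unity in $\F_{p^s}$---as this is what makes $N$ and $W$ coprime of degrees exactly $w_+$ and $w_-$.
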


With this lemma, we can now describe our construction.

\begin{construction}
\label{con:l1}
Let $n\geq b\geq 2$ and $D\geq 1$ be integers. Let $p$ be the smallest prime such that $p\geq 2b+1$. Define
\[
s \eqdef \ceil*{\log_p (D+1)},
\]
and $C_L$ be a BCH code, of length $D$, over $\F_p$, correcting $b-1$ errors in the Lee metric, as described in Lemma~\ref{lem:bch}. Let $A$ be an $r(C_L)\times D$ matrix over $\F_p$ that is a parity-check matrix for $C_L$. Further define
\begin{align}
\label{eq:l1m}
m & \eqdef \ceil*{\log_2 (n^D+1)}, & 
a &\eqdef \ceil*{\log_2 (p^{r(C_L)}+1)}, &
a' &\eqdef \ceil*{\log_2 (p^D+1)}.
\end{align}
Let $\alpha\in \F_{2^m}$, $\beta\in \F_{2^a}$, and $\gamma\in\F_{2^{a'}}$, be primitive elements in their respective fields. Assume $\gcd(p,2^m-1)=1$. Let $H$ be a matrix, whose columns are indexed by $[np]^D$, and whose $\bfi$-th column is
\begin{equation}
\label{eq:Hl1}
\bfh^\T_\bfi = \begin{pmatrix}
\beta^{[A\bfi^\T \bmod p]_p} \\
\beta^{3[A\bfi^\T \bmod p]_p} \\
\floor{\bfi^\T / p}  \bmod 2 \\
\gamma^{[\bfi \bmod p]_p} \\
\alpha^{[\bfi]_n}
\end{pmatrix}.
\end{equation}
We construct the $D$-dimensional binary code $C$, whose parity-check matrix is $H$.
\end{construction}

\begin{theorem}
Let $C$ be the code with parity-check matrix $H$ from Construction~\ref{con:l1}. Then $C$ is an $[N=(np)^D,k]$ $D$-dimensional code of length $(np)^{\times D}$, capable of correcting a single $2$-weight-limited $b$-burst in the $L_1$ model.
\end{theorem}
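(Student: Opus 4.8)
The plan is to reuse the decoding-algorithm framework of Theorem~\ref{th:linf} and Theorem~\ref{th:linfext}, supplying the one genuinely new ingredient required by the $L_1$ constraint: a Lee-metric decoding step. Given a received word $\bfy=\bfc+\bfe$ with $\bfe\in E_1(b,N)$, the receiver computes the five-part syndrome $\bfs^\T=H\bfe^\T=(s_0,s_1,\bfs_2^\T,s_3,s_4)^\T$ matching the blocks of \eqref{eq:Hl1}, and splits into the cases $\wt(\bfe)\in\set{0,1,2}$. The $\wt(\bfe)=0$ case is trivial, and the single-error case is handled exactly as before once the columns of $H$ are shown to be distinct: the fourth block $\gamma^{[\bfi\bmod p]_p}$ is injective in $\bfi\bmod p$ (since $[\bfi\bmod p]_p\le p^D-1<\ord(\gamma)$ by Proposition~\ref{prop:baseq}), and then the bottom block $\alpha^{[\bfi]_n}$ together with $\gcd(p,2^m-1)=1$ pins down $\floor{\bfi/p}$, precisely as $\alpha^{[\bfi]_n}$ and $\gcd(b,2^m-1)=1$ do in Theorem~\ref{th:linfext}.

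The heart of the proof is the two-error case. First I would argue that the two BCH positions are distinct. Writing $\bfDelta=\bfj-\bfi$, the $L_1$-closeness gives $\sum_{t\in[D]}\abs{\Delta_t}\le b-1$; since $p\ge 2b+1$ forces $\abs{\Delta_t}<p/2$ for every $t$, the Lee weight of $\bfDelta\bmod p$ equals $\sum_{t\in[D]}\abs{\Delta_t}\le b-1$. The code $C_L$ of Lemma~\ref{lem:bch} has Lee distance at least $2b$, so its only codeword of Lee weight below $2b$ is $\bfzero$; as $\bfDelta\bmod p\ne\bfzero$ (its entries lie in $(-p,p)$ and $\bfDelta\ne\bfzero$), we get $A\bfDelta^\T\not\equiv\bfzero\pmod p$, hence $[A\bfi^\T\bmod p]_p\ne[A\bfj^\T\bmod p]_p$. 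Consequently the top two blocks behave like two genuinely distinct columns of a binary $2$-error-correcting BCH code, which (i) distinguishes the three cases, and (ii) recovers the unordered pair $\set{A\bfi^\T\bmod p,\,A\bfj^\T\bmod p}$, and therefore the Lee syndrome $A\bfDelta^\T\bmod p$ up to sign. Feeding this syndrome to the $(b-1)$-error Lee decoder of $C_L$ returns $\bfDelta\bmod p$, and since each entry has absolute value $<p/2$ this determines $\bfDelta$ itself (up to the global sign that merely swaps the roles of $\bfi$ and $\bfj$). This Lee-decoding step is the new and most delicate part of the argument; everything downstream is bookkeeping analogous to the $L_\infty$ model.

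With $\bfDelta$ in hand, the remaining task is to pin down the absolute position $\bfi$, equivalently $\bfi\bmod p$ and $\bfz=\floor{\bfi/p}\in[n]^D$. Here $\bfj=\bfi+\bfDelta$ as integer vectors, so $[\bfj]_n=[\bfi]_n+[\bfDelta]_n$ and $s_4=\alpha^{[\bfi]_n}\parenv*{1+\alpha^{[\bfDelta]_n}}$; because $\bfDelta\in\set{-(b-1),\dots,b-1}^D$ is nonzero, Proposition~\ref{prop:baseq} gives $[\bfDelta]_n\ne0$ and $\abs{[\bfDelta]_n}<\ord(\alpha)$, so $1+\alpha^{[\bfDelta]_n}\ne0$ and $\alpha^{[\bfi]_n}$ is recovered. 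The third block $\bfs_2=(\floor{\bfi/p}+\floor{\bfj/p})\bmod 2$ encodes the per-coordinate carry $\floor{\bfj/p}-\floor{\bfi/p}\in\set{-1,0,1}^D$, which together with the known $\bfDelta$ relates $\bfi\bmod p$ to $\bfj\bmod p$; the fourth block $s_3=\gamma^{[\bfi\bmod p]_p}+\gamma^{[\bfj\bmod p]_p}$ then serves to identify $\bfi\bmod p$, after which the factorization $\alpha^{[\bfi]_n}=(\alpha^p)^{[\bfz]_n}\alpha^{[\bfi\bmod p]_n}$ and the primitivity of $\alpha^p$ (from $\gcd(p,2^m-1)=1$) yield a unique $\bfz\in[n]^D$, exactly as in Theorem~\ref{th:linfext}. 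Reconstructing $\bfi$ and $\bfj=\bfi+\bfDelta$ completes decoding. The step I expect to require the most care is disentangling $\bfi\bmod p$ from $\floor{\bfi/p}$ using $\bfs_2,s_3,s_4$ jointly: unlike the $L_\infty$ construction, $\bfi\bmod p$ is not read directly off the BCH syndrome but only through the dedicated $\gamma$-layer, so establishing that a unique consistent position survives is the main obstacle beyond the Lee-decoding step.
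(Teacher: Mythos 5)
Your proposal is correct and follows essentially the same route as the paper's proof: the same five-block syndrome and case split, the same translation of the $L_1$ constraint into a Lee-weight constraint on $\bfDelta \bmod p$ decoded through $A$ and the Lee-metric BCH code of Lemma~\ref{lem:bch}, and the same layered recovery of $\bfDelta$, then $\bfi \bmod p$ (via $\bfs_2$ and $s_3$), then $\floor{\bfi/p}$ (via $s_4$ and the primitivity of $\alpha^p$). Your explicit treatment of the unordered-pair/sign ambiguity in the binary BCH decoder's output (noting that it merely swaps the roles of $\bfi$ and $\bfj$) is a point the paper leaves implicit, but it does not alter the argument.
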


\begin{proof}
The proof is an elaboration of the process presented in the proof of Theorem~\ref{th:linf}. Assume $\bfc\in C$ was transmitted, and $\bfy=\bfc+\bfe$ was received, where $\bfe\in E_1(b,N)$. The receiver computes the syndrome:
\[
\bfs^\T=H\bfy^\T=H\bfe^\T=\begin{pmatrix}
s_0 \\
s_1 \\
\bfs^\T_2 \\
s_3 \\
s_4
\end{pmatrix}
\]
where the components correspond to those of $H$ from~\eqref{eq:Hl1}, i.e., $s_0,s_1\in\F_{2^a}$, $\bfs_2\in\F^D_2$, $s_3\in\F_{2^{a'}}$, and $s_4\in\F_{2^m}$. We distinguish between cases depending on $\wt(\bfe)$.

\textbf{Case 1:} $\wt(\bfe)=0$. In this case $\bfe=\bfzero$ and $\bfs=\bfzero$ and the decoder returns $\bfy$.

\textbf{Case 2:} $\wt(\bfe)=1$. Let $\bfi\in[np]^D$ be the location of the error, i.e., the unique index such that $e_{\bfi}=1$. Thus, $\bfs=\bfh_{\bfi}$. The decoder distinguishes this case from Case 1 by the fact that $s_0,s_1\neq 0$. Additionally, $s_0^3=s_1$ which we later show cannot happen in Case 3.

To complete this case, it remains to show that the columns of $H$ are all distinct. Assume $\bfh^\T_{\bfi}=\bfh^\T_{\bfj}$, for some $\bfi,\bfj\in[np]^D$. By considering $s_3$, we have
\[
\gamma^{[\bfi \bmod p]_p}=\gamma^{[\bfj \bmod p]_p}.
\]
Since, by Proposition~\ref{prop:baseq},
\[ 0\leq [\bfi \bmod p]_p, [\bfj \bmod p]_p \leq p^D-1 < 2^{a'}-1=\ord(\gamma),\]
we must therefore have,
\[ \bfi \equiv \bfj \pmod{p}.\]
This implies that $\bfj = \bfi + p\bfz$ for some $\bfz\in\set{-(n-1),\dots,n-1}^D$. From the last element of the syndrome we now have
\[
\alpha^{[\bfi]_n}=\alpha^{[\bfj]_n}=\alpha^{[\bfi]_n+p[\bfz]_n},
\]
and so
\begin{equation}
\label{eq:l1alphabp}
(\alpha^p)^{[\bfz]_n}=1.
\end{equation}
Since $\gcd(p,2^m-1)=1$, $\alpha^p$ is also a primitive element in $\F_{2^m}$. By Proposition~\ref{prop:baseq},
\[
\abs*{[\bfz]_n}\leq n^D-1 < 2^m-1= \ord(\alpha)=\ord(\alpha^p).
\]
Thus, the only way for~\eqref{eq:l1alphabp} to hold is to have $\bfz=\bfzero$, and then $\bfi=\bfj$, proving our claim that the columns of $H$ are distinct.

\textbf{Case 3:} $\wt(\bfe)=2$. We denote the two erroneous positions by $\bfi,\bfj\in[np]^D$, $\bfi\neq\bfj$. We have
\[
\bfj = \bfi + \bfeps, \qquad\text{ with } \sum_{\ell\in[D]} \abs*{\eps_\ell} \leq b-1.
\]
We obviously also have
\[
\bfj \equiv \bfi + \bfeps \pmod{p},
\]
but crucially,
\begin{equation}
\label{eq:leeeps}
\wt_L(\bfeps \bmod p) \leq b-1,
\end{equation}
thereby translating the $L_1$ constraint to a Lee-metric constraint.

The receiver can compute the syndrome, obtaining
\[
\bfs^\T = \begin{pmatrix}
s_0 \\ s_1 \\ \bfs^\T_2 \\ s_3 \\ s_4
\end{pmatrix}
=\begin{pmatrix}
\beta^{[A\bfi^\T \bmod p]_p}+\beta^{[A\bfj^\T \bmod p]_p} \\
\beta^{3[A\bfi^\T \bmod p]_p}+\beta^{3[A\bfj^\T \bmod p]_p} \\
(\floor{\bfi^\T / p} + \floor{\bfj^\T / p} ) \bmod 2 \\
\gamma^{[\bfi \bmod p]_p}+\gamma^{[\bfj \bmod p]_p} \\
\alpha^{[\bfi]_n}+\alpha^{[\bfj]_n}
\end{pmatrix}.
\]

By construction, $A$ is a parity-check matrix for a (linear shortened) BCH code over $\F_p$ that is capable of correcting error patterns of Lee weight at most $b-1$. Thus, by~\eqref{eq:leeeps},
\[
A\bfj^\T \equiv A(\bfi + \bfeps)^\T \not\equiv A\bfi^\T \pmod{p}.
\]
The cosets of $C_L$ are in one-to-one relation with the syndromes defined by $A$. Thus, each syndrome $\bfz\in\F_p^{r(C_L)}$ (eq., coset) is assigned a unique value $[\bfz]_p$ by Proposition~\ref{prop:baseq}. Since $p^{r(C_L)}-1<2^a-1\leq \ord(\beta)$, distinct syndromes are assigned distinct $\beta^{[\bfz]_p}$. It follows that
\[
\begin{pmatrix}
\beta^{[A\bfi^\T \bmod p]_p} \\
\beta^{3[A\bfi^\T \bmod p]_p} \\
\end{pmatrix}
\qquad \text{and} \qquad
\begin{pmatrix}
\beta^{[A\bfj^\T \bmod p]_p} \\
\beta^{3[A\bfj^\T \bmod p]_p} \\
\end{pmatrix}
\]
are distinct columns of a $2$-error-correcting binary primitive BCH code~\cite[Chapter 9]{MacSlo78}. Since that code is capable of correcting two errors, as well as a single error, we have
\begin{align*}
\begin{pmatrix}
\beta^{[A\bfi^\T \bmod p]_p}+\beta^{[A\bfj^\T \bmod p]_p} \\
\beta^{3[A\bfi^\T \bmod p]_p}+\beta^{3[A\bfj^\T \bmod p]_p}
\end{pmatrix}
& \neq
\bfzero^\T, \\
\begin{pmatrix}
\beta^{[A\bfi^\T \bmod p]_p}+\beta^{[A\bfj^\T \bmod p]_p} \\
\beta^{3[A\bfi^\T \bmod p]_p}+\beta^{3[A\bfj^\T \bmod p]_p}
\end{pmatrix}
&\neq
\begin{pmatrix}
\beta^{[A\bfl^\T \bmod p]_p} \\
\beta^{3[A\bfl^\T \bmod p]_p}
\end{pmatrix},
\end{align*}
for all $\bfl\in[np]^D$. Thus, the receiver can distinguish between Cases 1, 2, and 3. Additionally, the receiver knows $[A\bfi^\T \bmod p]_p$ and $[A\bfj^\T \bmod p]_p$ by applying the decoding procedure for the binary BCH code. Since $[\cdot]_p$ is one-to-one from $\F_p^{r(C_L)}$ to $[2^a]$, the receiver now also knows $A\bfi^\T \bmod p$ and $A\bfj^\T \bmod p$.

The next step is determining $\bfeps$. Over $\F_p$, the receiver can compute
\[
A\bfj^\T-A\bfi^\T \equiv A\bfeps^\T \pmod{p}.
\]
Since $\wt_L(\bfeps \bmod p)\leq b-1$, and since $C_L$ can correct $b-1$ errors in the Lee metric, the receiver can therefore find $\bfeps \bmod p$. Let us conveniently denote $\bfeps \bmod p = (\eps'_0,\eps'_1,\dots,\eps'_{D-1})$. We would like to find the exact value of $\bfeps$, bearing in mind that we know that $\sum_{\ell\in[D]}\abs{\eps_\ell} \leq b-1$. Since $p\geq 2b+1$ there is a unique way of doing so, by setting
\[
\eps_\ell = \begin{cases}
\eps'_\ell & 0\leq \eps'_\ell \leq b-1, \\
\eps'_\ell-p & p-b+1 \leq \eps'_\ell \leq p-1,
\end{cases}
\]
for each $\ell\in[D]$. Hence, the receiver may now know $\bfeps$.

Next, we undo the multiplication by the matrix $A$. To that end, we focus on $\bfs_2=(s_{2,0},s_{2,1},\dots,s_{2,D-1})$ and $s_3$. If we examine the $\ell$-th coordinate of $\bfi$ and $\bfj$, we know that $j_\ell=i_\ell+\eps_\ell$. We would like to get a similar equation involving $j_\ell \bmod p$ and $i_\ell \bmod p$. We note that the third component of every column of $H$ (which is responsible for computing $\bfs_2$), is effectively tiling $\Z$ with translates of $[p]$ in each direction. Thus, $s_{2,\ell}$ is $0$ if $i_\ell$ and $j_\ell$ are in the same translate, and $1$ otherwise. Thus, we may write
\begin{align*}
j_\ell \bmod p &= i_\ell \bmod p + \overline{\eps_\ell},\\
\overline{\eps_\ell} &= \begin{cases}
\eps_\ell & s_{2,\ell} = 0, \\
\eps_\ell-p & \text{$s_{2,\ell} = 1$ and $\eps_\ell>0$},\\
\eps_\ell+p & \text{$s_{2,\ell} = 1$ and $\eps_\ell<0$},
\end{cases}
\end{align*}
for all $\ell\in[D]$. Additionally, $-p < \overline{\eps_\ell} < p$. Collecting these together, we define $\overline{\bfeps}=(\overline{\eps_0},\dots,\overline{\eps_{D-1}})$.
Shifting to look at $s_3$, we now have
\begin{equation}
\label{eq:l1s3}
s_3 = \gamma^{[\bfi \bmod p]_p}+\gamma^{[\bfj \bmod p]_p} = \gamma^{[\bfi \bmod p]_p}+\gamma^{[\bfi \bmod p + \overline{\bfeps}]_p}
= \gamma^{[\bfi \bmod p]_p}\parenv*{1+\gamma^{[\overline{\bfeps}]_p}}.
\end{equation}
By Proposition~\ref{prop:baseq},
\[
0 < \abs*{[\overline{\bfeps}]_p} < p^D-1 < 2^{a'}-1 = \ord(\gamma),
\]
and so
\[ 1+\gamma^{[\overline{\bfeps}]_p}\neq 0.\]
Again, by Proposition~\ref{prop:baseq},
\[
0\leq [\bfi\bmod p]_p \leq p^D-1 < 2^{a'} -1 = \ord(\gamma),
\]
and thus, the receiver can solve~\eqref{eq:l1s3} and obtain $\bfi \bmod p$.

At this point, the receiver knows that
\begin{align}
\label{eq:l1bfidelta}
\bfi &= p\bfz+(\bfi \bmod p), &
\bfj &= p\bfz+(\bfi \bmod p) + \bfeps,
\end{align}
with the only unknown being $\bfz\in [n]^D$. Using the last part of the syndrome, we can now write
\begin{align*}
s_4 &= \alpha^{[\bfi]_n}+\alpha^{[\bfj]_n} = 
\alpha^{[p\bfz + (\bfi \bmod p)]_n} + \alpha^{[p\bfz + (\bfi \bmod p)+\bfeps]_n} \\
&= \alpha^{[p\bfz]_n} \cdot \alpha^{[\bfi \bmod p]_n}\cdot\parenv*{1+\alpha^{[\bfeps]_n}}
\end{align*}
Recall that $\abs{\eps_\ell}\leq b-1$ for all $\ell\in[D]$, hence
\[
0 < \abs*{[\bfeps]_n} \leq (b-1)n^{D-1} < 2^m-1 \leq \ord(\alpha),
\]
and so
\[
1+\alpha^{[\bfeps]_n}\neq 0.
\]
The receiver now needs to solve
\begin{equation}
\label{eq:l1bfz}
\alpha^{[p\bfz]_n}=\parenv*{\alpha^p}^{[\bfz]_n}=\frac{s_4}{\alpha^{[\bfi\bmod p]_n}\parenv*{1+\alpha^{[\bfeps]_n}}}.
\end{equation}
Since $\gcd(p,2^m-1)=1$, it follows that $\alpha^p$ is also primitive, and because $\bfz\in[n]^D$, Proposition~\ref{prop:baseq} ensures that
\[
0\leq [\bfz]_n \leq n^D-1 < 2^m-1 = \ord(\alpha^p),
\]
and there is a unique solution to~\eqref{eq:l1bfz}. The receiver now knows $\bfz$, $(\bfi \bmod p)$, and $\bfeps$, and so by~\eqref{eq:l1bfidelta}, the receiver knows the two positions in error, $\bfi$ and $\bfj$. It can then flip the bits in these two positions in $\bfy$ and correctly recover the transmitted codeword $\bfc$.
\end{proof}

\begin{corollary}
\label{cor:l1xi}
Let $C$ be the code from Construction~\ref{con:l1}. Then its excess redundancy satisfies
\[
\xi(C) \leq 2b \log_2 b + 2(b-1)\log_2(D+1) + 4b+D+4.
\]
\end{corollary}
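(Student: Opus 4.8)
The plan is to express $\xi(C)$ directly in terms of the row counts of the parity-check matrix $H$ from~\eqref{eq:Hl1}, and then bound each contribution using the definitions in~\eqref{eq:l1m} together with the redundancy estimate for $C_L$ supplied by Lemma~\ref{lem:bch}. Reading off the five blocks of $\bfh^\T_\bfi$, the two BCH-type rows contribute $a$ each, the tiling indicator $\floor{\bfi^\T/p}\bmod 2$ contributes $D$, the $\gamma$-block contributes $a'$, and the $\alpha$-block contributes $m$, so that $r(C)=2a+D+a'+m$. Since $N=(np)^D$, this gives
\[
\xi(C)=2a+D+a'+m-\ceil*{\log_2 (np)^D}.
\]

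First I would bound each ceiling from above using the elementary inequality $\ceil*{\log_2(x+1)}\leq \log_2 x+1$ for positive integers $x$ (already used in Corollary~\ref{cor:linfxiext}), obtaining $m\leq D\log_2 n+1$, $a'\leq D\log_2 p+1$, and $a\leq r(C_L)\log_2 p+1$. I would then substitute the bound $r(C_L)\leq 1+(b-1)s$ from Lemma~\ref{lem:bch} together with $s=\ceil*{\log_p(D+1)}\leq \log_p(D+1)+1$, yielding $r(C_L)\leq b+(b-1)\log_p(D+1)$.

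The key simplification is the change-of-base identity $\log_p(D+1)\cdot\log_2 p=\log_2(D+1)$, which converts the base-$p$ logarithm buried in the estimate for $a$ into the base-$2$ term appearing in the target; this turns that estimate into $a\leq b\log_2 p+(b-1)\log_2(D+1)+1$. Bounding the subtracted term from below by $\ceil*{\log_2 (np)^D}\geq D\log_2 n+D\log_2 p$, the $D\log_2 n$ term cancels against $m$ and one copy of $D\log_2 p$ cancels against $a'$, leaving
\[
\xi(C)\leq 2b\log_2 p+2(b-1)\log_2(D+1)+D+4.
\]

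The final step, and the only genuinely non-routine one, is to replace $p$ by $b$. Since $p$ is the smallest prime with $p\geq 2b+1$ and $b\geq 2$, I would invoke Bertrand's postulate applied to the integer $2b>1$: there is a prime strictly between $2b$ and $4b$, so the minimal such $p$ satisfies $p<4b$, whence $\log_2 p<\log_2 b+2$ and $2b\log_2 p<2b\log_2 b+4b$. Combining this with the previous display produces exactly $\xi(C)\leq 2b\log_2 b+2(b-1)\log_2(D+1)+4b+D+4$. The main obstacle is bookkeeping rather than depth: one must keep the nested estimates for $s$, $r(C_L)$, and $a$ mutually consistent, and in particular collapse the base-$p$ logarithm via change of base \emph{before} eliminating $p$, since performing these reductions in the wrong order would inflate the constants and break the clean cancellation of the $D\log_2 n$ and $D\log_2 p$ terms.
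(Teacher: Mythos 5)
Your proof is correct and takes essentially the same route as the paper's: both decompose $\xi(C)$ into the block sizes $2a+D+a'+m-\ceil{\log_2 (np)^D}$, bound each ceiling via $\ceil{\log_2(x+1)}\leq\log_2 x+1$, substitute $r(C_L)\leq 1+(b-1)s$ from Lemma~\ref{lem:bch} with $s\leq\log_p(D+1)+1$, and finish with Bertrand's postulate to get $p\leq 4b$. Your explicit change-of-base step is just a rephrasing of the paper's estimate $\log_2 p^s\leq\log_2\parenv*{p(D+1)}$, so the two arguments coincide term for term, including the intermediate bound $2b\log_2 p+2(b-1)\log_2(D+1)+D+4$.
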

\begin{proof}
We follow the definition of the excess redundancy, and write
\begin{align*}
\xi(C) & = 2\ceil*{\log_2 (p^{r(C_L)}+1)} + D + \ceil*{\log_2 (p^D+1)} + \ceil*{\log_2 (n^D+1)} - \ceil*{\log_2 (pn)^D} \\
&\overset{(a)}{\leq} 2\log_2 p + 2(b-1)\log_2 p^s + D + 4 
\leq 2\log_2 p + 2(b-1)\log_2 (p(D+1)) + D + 4 \\
&= 2b\log_2 p + 2(b-1)\log_2 (D+1) + D + 4,
\end{align*}
where $(a)$ follows by using Lemma~\ref{lem:bch} to get $r(C_L)\leq 1+(b-1)s$, and by using $\ceil{\log_u(x+1)}\leq \log_u x + 1$ for positive integers $u,x$, as well as $x\leq \ceil{x}\leq x+1$ for all real $x$.

By construction, $p$ is the smallest prime such that $p\geq 2b+1$. Thus, using Bertrand's postulate (e.g., see~\cite[Section 22.3]{HarWri79}), $p \leq 4b-3 \leq 4b$, which gives
\[
\xi(C) \leq 2b \log_2 b + 2(b-1)\log_2(D+1) + 4b+D+4.
\]
\end{proof}

We comment that the bound of Corollary~\ref{cor:l1xi} may be quite loose due to the worst-case rounding we assume. Additionally, better results may be obtained by codes tailored for a specific value of $b$. As an example, the code from~\cite[Th.~4]{SchEtz05}, designed for $b=2$ only, has excess redundancy of $\ceil{\log_2 D}+1$ for all large enough arrays. In the following remark we describe a similar construction for $b=3$.

\begin{remark}
\label{rem:l1b3}
Let $n\geq 3$ and $D\geq 1$ be integers. Define $m\eqdef \ceil{\log_2(n^D+1)}$, and let $\alpha\in\F_{2^m}$ be primitive. Let $A$ be a $2\ceil{\log_2(D+1)}\times D$ parity-check matrix of a (shortened) binary double-error-correcting BCH code of length $D$, and let $B$ be a $\ceil{\log_2 D}\times D$ binary matrix with distinct columns. Construct a parity-check matrix, $H$, whose columns are indexed by $[n]^D$, and whose $\bfi$-th column is
\[
\bfh_i = \begin{pmatrix}
1 \\
A \cdot \bfi^\T \bmod 2 \\
\floor{\bfone \cdot \bfi^\T / 2} \bmod 2 \\
\floor{B \cdot \bfi^\T/2} \bmod 2 \\
\alpha^{[\bfi]_n}
\end{pmatrix}.
\]
Using similar techniques to previous proofs, we can show that the code with parity-check matrix $H$ is capable of correcting a single $2$-weight-limited $3$-burst in the $L_1$ model. The excess redundancy of the code is upper bounded by $3\log_2 D + 6$.
\end{remark}

For a lower bound on the excess redundancy we turn to the following theorem.

\begin{theorem}
\label{th:l1xilower}
Let $D\geq 1$ and $b\geq 2$, and $n\geq 4D(b-1)$ be integers. Let $C$ be an $[N=n^D,k]$ $D$-dimensional code of length $n^{\times D}$, that is capable of correcting a single $2$-weight-limited $b$-burst in the $L_1$ model. Then its excess redundancy satisfies
\begin{align*}
\xi(C) &\geq
\begin{cases}
b-1 + \log_2 \binom{D}{b-1} - 3 & D\geq b-1, \\
D + \log_2 \binom{b-1}{D} - 3 & D<b-1,
\end{cases} \\
&\geq
\begin{cases}
(b-1)(1+\log_2(D-b+2)-\log_2(b-1))-3 & D\geq b-1, \\
D(1+\log_2(b-D)-\log_2 D)-3 & D< b-1.
\end{cases}
\end{align*}
\end{theorem}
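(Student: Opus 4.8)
The plan is to invoke the ball-packing bound \eqref{eq:rlowbound}, which yields $\xi(C)\ge \log_2\abs{E_1(b,N)}-\ceil{\log_2 N}$, and then to lower-bound $\abs{E_1(b,N)}$ by counting only its weight-$2$ elements (the weight-$0$ and weight-$1$ bursts only help). A weight-$2$ burst is an unordered pair $\set{\bfi,\bfj}$ with $\bfi\ne\bfj$ and $\sum_{t\in[D]}\abs{i_t-j_t}\le b-1$; I would organize the count by the displacement $\bfDelta=\bfj-\bfi$, counting ordered pairs and dividing by $2$. For a fixed nonzero $\bfDelta$ with $\sum_{t}\abs{\Delta_t}\le b-1$, the number of $\bfi\in[n]^D$ with $\bfi+\bfDelta\in[n]^D$ is $\prod_{t\in[D]}(n-\abs{\Delta_t})$, which I would lower-bound by $n^D(1-(b-1)/n)$ via $\prod_t(1-x_t)\ge 1-\sum_t x_t$ (valid since each $\abs{\Delta_t}/n\le 1$) together with $\sum_t\abs{\Delta_t}\le b-1$.

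The crux is to exhibit a large, cleanly-counted family of admissible displacements, and here the statement's two regimes arise naturally. When $D\ge b-1$, I would restrict to those $\bfDelta$ supported on exactly $b-1$ coordinates with every nonzero entry equal to $\pm1$; these have $\sum_t\abs{\Delta_t}=b-1$ and number exactly $2^{b-1}\binom{D}{b-1}$. When $D<b-1$, I would instead use the $\bfDelta$ of full support (all $D$ coordinates nonzero) with $\sum_t\abs{\Delta_t}\le b-1$; choosing the $D$ signs freely and counting the magnitude vectors in $\Z_{\ge1}^D$ with coordinate sum at most $b-1$ by stars-and-bars gives $\binom{b-1}{D}$ such vectors, hence $2^D\binom{b-1}{D}$ displacements. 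Writing $M$ for this count and noting each family is closed under $\bfDelta\mapsto-\bfDelta$, I obtain $\abs{E_1(b,N)}\ge \tfrac12 M\,n^D(1-(b-1)/n)$.

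Combining the two estimates gives $\xi(C)\ge \log_2 M+\log_2(1-(b-1)/n)-2$, where the $-2$ absorbs the factor $\tfrac12$ and the slack $\ceil{\log_2 N}\le \log_2 N+1$. The hypothesis $n\ge 4D(b-1)\ge 2(b-1)$ forces $1-(b-1)/n\ge\tfrac12$, so $\log_2(1-(b-1)/n)\ge-1$ and thus $\xi(C)\ge\log_2 M-3$. Substituting the two values of $M$ yields the first displayed bound. For the second displayed bound I would apply the elementary estimate $\binom{u}{v}\ge\parenv*{(u-v+1)/v}^v$ (each factor of $\prod_{i=0}^{v-1}\frac{u-i}{v-i}$ is at least $u/v\ge(u-v+1)/v$): this turns $\log_2\binom{D}{b-1}$ into at least $(b-1)\log_2\frac{D-b+2}{b-1}$ in the first regime and $\log_2\binom{b-1}{D}$ into at least $D\log_2\frac{b-D}{D}$ in the second, giving exactly the stated forms after collecting the leading $b-1$ and $D$ terms.

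I expect the difficulty to be bookkeeping rather than conceptual: landing the displacement count exactly on the clean binomials $2^{b-1}\binom{D}{b-1}$ and $2^D\binom{b-1}{D}$ requires choosing the correct test family in each regime and verifying the identity $\abs{\set{\mathbf{m}\in\Z_{\ge1}^D:\sum_t m_t\le b-1}}=\binom{b-1}{D}$, while one must also check that the boundary loss $1-(b-1)/n$ and the ceiling slack are jointly small enough to be swallowed by the single constant $-3$, which is precisely what the lower bound $n\ge 4D(b-1)$ secures.
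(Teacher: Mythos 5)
Your proposal is correct and follows essentially the same route as the paper's proof: the ball-packing bound \eqref{eq:rlowbound}, a lower bound on $\abs{E_1(b,N)}$ of roughly $\tfrac14 n^D$ times the dominant displacement count ($2^{b-1}\binom{D}{b-1}$ when $D\ge b-1$, $2^{D}\binom{b-1}{D}$ when $D<b-1$), and the identical binomial estimates $\binom{u}{v}\ge\parenv*{(u-v+1)/v}^v$ at the end. The only difference is bookkeeping: the paper restricts $\bfi$ to an interior box $(n-2b+2)^D$ and invokes the Golomb--Welch ball-size formula $\sum_{\ell}2^{\ell}\binom{D}{\ell}\binom{b-1}{\ell}$ from \cite{GolWel70} before truncating to its last summand, whereas you construct that dominant family of displacements directly and absorb boundary effects per displacement via $\prod_t (n-\abs{\Delta_t})\ge n^D(1-(b-1)/n)$, a self-contained and marginally more economical variant that in fact needs only $n\ge 2(b-1)$ for this step.
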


\begin{proof}
We use~\eqref{eq:rlowbound}, and get
\[
\xi(C) \geq \log_2 \abs*{E_1(b,N)}-\ceil*{\log_2 N}.
\]
Computing the exact value of $\abs{E_1(b,N)}$ is quite involved, and instead we lower bound it. The set $E_1(b,N)$ trivially contains the unique all-zero error pattern, as well as $N$ single-error patterns. For double-error patterns, let us first choose $\bfi$ such that $b-1\leq i_\ell \leq n-b$. Now, any choice of $\bfj=\bfi+\bfeps$, with $0<\sum_{\ell\in[D]}\abs{\eps_\ell}\leq b-1$, is guaranteed to give us $\bfj\in [n]^D$. By~\cite[Theorem 4]{GolWel70}, the number of ways of choosing $\bfeps$ (and hence, choosing $\bfj$) is
\[
\sum_{\ell = 0}^{\min\set{D,b-1}} 2^{\ell}\binom{D}{\ell}\binom{b-1}{\ell}-1.
\]
Since each pair $\set{\bfi,\bfj}$ may be counted at most twice (depending on the order of choosing $\bfi$ and $\bfj$), we have
\begin{align}
\abs*{E_1(b,N)} &\geq 1+n^D+\frac{1}{2}(n-2b+2)^D\parenv*{\sum_{\ell=0}^{\min\set{D,b-1}}2^\ell \binom{D}{\ell}\binom{b-1}{\ell}-1} \nonumber \\
&\geq \frac{1}{2}n^D\parenv*{1-\frac{2b-2}{n}}^D\cdot\sum_{\ell=0}^{\min\set{D,b-1}}2^\ell \binom{D}{\ell}\binom{b-1}{\ell} \nonumber \\
&\overset{(a)}{\geq} \frac{1}{2}n^D\parenv*{1-\frac{2D(b-1)}{n}}\cdot\sum_{\ell=0}^{\min\set{D,b-1}}2^\ell \binom{D}{\ell}\binom{b-1}{\ell} \nonumber \\
&\overset{(b)}{\geq} \frac{1}{4}n^D \cdot \begin{cases}
2^{b-1} \binom{D}{b-1} & D \geq b-1, \\
2^D \binom{b-1}{D} & D < b-1,
\end{cases}
\label{eq:improve}
\end{align}
where $(a)$ follows from Bernoulli's inequality, and $(b)$ uses $n\geq 4D(b-1)$.

We can now use this in the expression for $\xi(C)$ and obtain
\begin{align*}
\xi(C) &\geq
\begin{cases}
b-1 + \log_2 \binom{D}{b-1} - 3 & D\geq b-1, \\
D + \log_2 \binom{b-1}{D} - 3 & D<b-1,
\end{cases} \\
&\geq
\begin{cases}
(b-1)(1+\log_2(D-b+2)-\log_2(b-1))-3 & D\geq b-1, \\
D(1+\log_2(b-D)-\log_2 D)-3 & D< b-1,
\end{cases}
\end{align*}
by using $\ceil{x}\leq x+1$, as well as
\begin{align*}
\binom{D}{b-1} & \geq \frac{(D-b+2)^{b-1}}{(b-1)^{b-1}}, &
\binom{b-1}{D} & \geq \frac{(b-D)^D}{D^D}.
\end{align*}
\end{proof}

\begin{remark}
We observe that inequality $(b)$ in~\eqref{eq:improve} reduces the sum to a single element (the last in the summation). When $D$ and $b$ are large, a more refined approach may reduce the sum to a carefully chosen summand, resulting in a better bound. To see that, denote 
\begin{align*}
Z&\eqdef \max\set{D,b-1}, & z &\eqdef \min\set{D,b-1},\\
\zeta &\eqdef \frac{z}{Z}, & \eta & \eqdef 1+\zeta - \sqrt{1+\zeta^2}.
\end{align*}
Recall the definition of the binary entropy function,
\[
H(x) \eqdef -x\log_2 x - (1-x)\log_2 (1-x),
\]
and the fact that~\cite[Ch.~10.11, Lemma 7]{MacSlo78}
\[
\binom{n}{\lambda n} = 2^{nH(\lambda)(1+o(1))},
\]
for all $0<\lambda< 1$ such that $\lambda n $ is an integer. Here, $o(1)$ denotes a vanishing function as $n\to\infty$.

Now, in $(b)$ of~\eqref{eq:improve}, we replace that sum with the $\ceil{\eta Z}$-th summand. With that replacement, we get
\[
\abs*{E_1(b,N)}\geq \frac{1}{4}n^D 2^{Z(\eta+H(\eta)+\zeta H(\eta/\zeta))(1+o(1))},
\]
and thus
\[
\xi(C) \geq Z(\eta+H(\eta)+\zeta H(\eta/\zeta))(1+o(1)) - 3,
\]
where $o(1)$ denotes a vanishing function as $Z=\max\set{D,b-1}\to\infty$.
\end{remark}

%%%%%%%%%%%%%%%%%%%%%%%%%%%%%%%%%%%%%%%%%%%%%%%%%%%%%%%%%%%%%%%%
%%%%%%%%%%%%%%%%%%%%%%%%%%%%%%%%%%%%%%%%%%%%%%%%%%%%%%%%%%%%%%%%
%%%%%%%%%%%%%%%%%%%%%%%%%%%%%%%%%%%%%%%%%%%%%%%%%%%%%%%%%%%%%%%%
\section{The Straight Model}
\label{sec:straight}
%%%%%%%%%%%%%%%%%%%%%%%%%%%%%%%%%%%%%%%%%%%%%%%%%%%%%%%%%%%%%%%%
%%%%%%%%%%%%%%%%%%%%%%%%%%%%%%%%%%%%%%%%%%%%%%%%%%%%%%%%%%%%%%%%
%%%%%%%%%%%%%%%%%%%%%%%%%%%%%%%%%%%%%%%%%%%%%%%%%%%%%%%%%%%%%%%%

The final model we consider is the straight model. A $2$-weight-limited $b$-burst in this model contains either a single bit flip, or two positions with a bit flip whose coordinates differ in a single place (i.e., the two erroneous positions are placed along an axis-parallel line). We provide a construction and a lower bound on the excess redundancy of such codes. 

The construction we present is inspired by~\cite[Construction D]{YaaEtz10}. However, our construction is more elaborate, making use of a packing design to construct the parity-check matrix of the code.

A $(t,k,v)$-packing design\footnote{sometimes denoted as $t$-$(v,k,1)$ packing design} is a pair $(X,\cB)$, where $X$ is a finite set, $\abs{X}=v$, whose elements are called \emph{points}, and $\cB$ is a set of $k$-subsets of $X$, whose elements are called \emph{blocks}. Finally, every $t$-subset of points is contained in at most one block.

\begin{construction}
\label{con:straight}
Let $n\geq b\geq 2$ and $D\geq 1$ be integers. Let $(X,\cB)$ be a $(2,b,v)$-packing design, with $X=[v]$, and whose number of blocks is $\abs{\cB}= D$, $\cB=\set{S_0,S_1,\dots,S_{D-1}}$. Let us now denote the contents of the $i$-th block by $S_i=\set{s_{i,0},s_{i,1},\dots,s_{i,b-1}}$, for all $i\in[D]$.

Define
\begin{align*}
m & \eqdef \ceil*{\log_2 (n^D+1)}, & 
a &\eqdef \ceil*{\log_2 (v+1)}.
\end{align*}
Let $\alpha\in \F_{2^m}$ be primitive element. Assume $A$ is a $2a\times v$ parity-check matrix for a (shortened) binary double-error-correcting BCH code, and denote its $i$-th column by $\bfa^\T_i$.

Let $H$ be a matrix, whose columns are indexed by $[n]^D$, and whose $\bfi$-th column is
\begin{equation}
\label{eq:Hstr}
\bfh^\T_\bfi = \begin{pmatrix}
1 \\
\sum_{\ell\in[D]} \bfa^\T_{s_{\ell,(i_\ell \bmod b)}} \\
\bfone\cdot\floor{\bfi^\T / b} \bmod 2 \\
\alpha^{[\bfi]_n}
\end{pmatrix}.
\end{equation}
We construct the $D$-dimensional binary code $C$, whose parity-check matrix is $H$.
\end{construction}

\begin{theorem}
\label{th:str}
Let $C$ be the code with parity-check matrix $H$ from Construction~\ref{con:straight}. Then $C$ is an $[N=n^D,k]$ $D$-dimensional code of length $n^{\times D}$, capable of correcting a single $2$-weight-limited $b$-burst in the straight model.
\end{theorem}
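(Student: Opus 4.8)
The plan is to mirror the decoding-algorithm strategy of the proof of Theorem~\ref{th:linf}, exploiting the layered structure of $H$. Upon receiving $\bfy=\bfc+\bfe$ with $\bfe\in E_\str(b,N)$, the decoder forms the syndrome $\bfs^\T=H\bfe^\T$, whose blocks I write as $(s_0,\bfs_1^\T,s_2,s_3)$ according to~\eqref{eq:Hstr}: a parity bit $s_0$ from the all-ones top row, a vector $\bfs_1\in\F_2^{2a}$ from the BCH layer, a parity bit $s_2$ from the block-parity layer, and $s_3\in\F_{2^m}$ from the bottom layer. I then split into the three cases $\wt(\bfe)\in\set{0,1,2}$. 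The organizing observation is that the top row computes $\wt(\bfe)\bmod 2$, so $s_0=1$ exactly in the weight-$1$ case, cleanly separating it from the other two. The weight-$0$ and weight-$2$ cases both give $s_0=0$, but are told apart because a genuine weight-$2$ burst forces $s_3\neq 0$: distinct positions give distinct $\alpha^{[\cdot]_n}$ as in Theorem~\ref{th:linf}, so their sum is nonzero in characteristic $2$. The weight-$0$ and weight-$1$ cases are then handled as before; in particular the bottom layer $\alpha^{[\bfi]_n}$ already makes all columns of $H$ distinct by Proposition~\ref{prop:baseq} (using $n^D-1<\ord(\alpha)$), so the single-error position is read off directly.

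The substance is the weight-$2$ case, where the two error positions $\bfi,\bfj$ differ in exactly one coordinate $\ell^*$ with $0<\abs{i_{\ell^*}-j_{\ell^*}}<b$. Here I would first exploit the axis-parallel restriction: since $i_\ell=j_\ell$ for every $\ell\neq\ell^*$, all the corresponding summands in the BCH layer cancel over $\F_2$, leaving
\[
\bfs_1^\T=\bfa^\T_{u}+\bfa^\T_{w},\qquad u\eqdef s_{\ell^*,(i_{\ell^*}\bmod b)},\ \ w\eqdef s_{\ell^*,(j_{\ell^*}\bmod b)}.
\]
Because $0<\abs{i_{\ell^*}-j_{\ell^*}}<b$ we have $i_{\ell^*}\not\equiv j_{\ell^*}\pmod b$, so $u\neq w$ are two \emph{distinct} points lying in the block $S_{\ell^*}$. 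Decoding $\bfs_1$ with the double-error-correcting BCH code recovers the unordered pair $\set{u,w}$, and its nonzeroness also confirms we are in the weight-$2$ case.

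The crux—and the step I expect to be the main obstacle to state cleanly—is turning $\set{u,w}$ back into the coordinate $\ell^*$ and the residues, which is precisely what the packing design buys. The pair $\set{u,w}$ is a $2$-subset of $X$ contained in $S_{\ell^*}$, and by the defining property of a $(2,b,v)$-packing design every $2$-subset lies in \emph{at most} one block; hence $S_{\ell^*}$ is the \emph{unique} block containing $\set{u,w}$, which identifies $\ell^*$. Inside $S_{\ell^*}$ the entries $s_{\ell^*,0},\dots,s_{\ell^*,b-1}$ are distinct, so matching $u,w$ against them recovers the residue pair $\set{i_{\ell^*}\bmod b,\ j_{\ell^*}\bmod b}$. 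From here the argument rejoins the $L_\infty$ proof verbatim in the single coordinate $\ell^*$: the parity bit $s_2=(\floor{i_{\ell^*}/b}+\floor{j_{\ell^*}/b})\bmod 2$ distinguishes ``same $b$-block'' from ``adjacent $b$-blocks'' and thus pins down $\Delta_{\ell^*}\eqdef j_{\ell^*}-i_{\ell^*}$ up to the harmless sign ambiguity of swapping $\bfi$ and $\bfj$, yielding the difference vector $\bfDelta$ (zero off coordinate $\ell^*$). Finally, since $\bfDelta\neq\bfzero$ lies in $\set{-(b-1),\dots,b-1}^D$, Proposition~\ref{prop:baseq} gives $[\bfDelta]_n\neq 0$ with $\abs{[\bfDelta]_n}<\ord(\alpha)$, so $1+\alpha^{[\bfDelta]_n}\neq 0$ and the relation $s_3=\alpha^{[\bfi]_n}\parenv{1+\alpha^{[\bfDelta]_n}}$ can be solved for the unique $\alpha^{[\bfi]_n}$, hence for $\bfi$, and then $\bfj=\bfi+\bfDelta$; flipping the bits of $\bfy$ at $\bfi$ and $\bfj$ recovers $\bfc$. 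The only genuinely new ingredient relative to the $L_\infty$ case is the packing-design identification of $\ell^*$, while the collapse of the block-parity layer to a single bit is exactly what the one-coordinate structure of the straight model permits.
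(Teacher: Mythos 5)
Your proof is correct and follows essentially the same route as the paper's own: the same syndrome decomposition $(s_0,\bfs_1,s_2,s_3)$ and case analysis, the same recovery of the point pair $\set{u,w}$ via the double-error-correcting BCH layer, the same packing-design identification of the unique block (hence the coordinate $\ell^*$), the same use of $s_2$ to resolve same-block versus adjacent-block, and the same final resolution of $\bfi$ via $s_3=\alpha^{[\bfi]_n}\parenv*{1+\alpha^{[\bfDelta]_n}}$. Your explicit remark that the unordered-pair (sign) ambiguity is harmless is a small point the paper leaves implicit, but it is not a different approach.
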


\begin{proof}
Assume $\bfc\in C$ was transmitted, and $\bfy=\bfc+\bfe$ was received, where $\bfe\in E_{\str}(b,N)$. The receiver computes the syndrome:
\[
\bfs^\T=H\bfy^\T=H\bfe^\T=\begin{pmatrix}
s_0 \\
\bfs^{\T}_1 \\
s_2 \\
s_3
\end{pmatrix}
\]
where the components correspond to those of $H$ from~\eqref{eq:Hstr}, i.e., $s_0,s_2\in\F_2$, $\bfs_1\in\F_2^{2a}$, and $s_3\in\F_{2^m}$. We distinguish between cases depending on $\wt(\bfe)$.

\textbf{Case 1:} $\wt(\bfe)=0$. In this case $\bfe=\bfzero$ and $\bfs=\bfzero$ and the decoder returns $\bfy$.

\textbf{Case 2:} $\wt(\bfe)=1$. Let $\bfi\in[n]^D$ be the location of the error, i.e., the unique index such that $e_{\bfi}=1$. Thus, $\bfs=\bfh_{\bfi}$. The decoder distinguishes this case from Case 1 by the fact that $s_0=1$. To complete this case, it remains to show that the columns of $H$ are all distinct. 

By Proposition~\ref{prop:baseq},
\[ 0\leq [\bfi]_n \leq n^D-1 < 2^{m}-1=\ord(\alpha),\]
and so distinct $\bfi$ result in distinct $\alpha^{[\bfi]_n}$.

\textbf{Case 3:} $\wt(\bfe)=2$. In this case two errors occurred. We denote their positions as $\bfi,\bfj\in[n]^D$, $\bfi\neq\bfj$, and we have
\[
\bfs^\T = \begin{pmatrix}
s_0 \\ \bfs^\T_1 \\ s_2 \\ s_3
\end{pmatrix}
=\begin{pmatrix}
0 \\
\sum_{\ell\in[D]} \bfa^\T_{s_{\ell,(i_\ell \bmod b)}} + \sum_{\ell\in[D]} \bfa^\T_{s_{\ell,(j_\ell \bmod b)}} \\
(\bfone \cdot \floor{\bfi^\T / b}+\bfone \cdot \floor{\bfj^\T / b}) \bmod 2 \\
\alpha^{[\bfi]_n}+\alpha^{[\bfj]_n}
\end{pmatrix}.
\]
We observe that we therefore have $s_0=0$ but $s_3\neq 0$ (by our observations in Case 2). This enables the receiver to completely distinguish between Case 1, Case 2, and Case 3.

By the definition of the straight model, 
\[
\bfj = \bfi + \mu \bfeps_\ell,
\]
where $\bfeps_\ell$ is the $\ell$-th unit vector, and $-(b-1) \leq \mu \leq b-1$ is an integer, $\mu\neq 0$.  It then follows that we can write
\[
\bfs^\T_1 = \bfa^\T_{s_{\ell,(i_\ell \bmod b)}} +  \bfa^\T_{s_{\ell,(j_\ell \bmod b)}},
\]
namely, $s_1$ is the sum of two columns of $A$. But $A$ is a parity-check matrix for a (shortened) binary double-error-correcting BCH code, and so the identity of these columns may be recovered by the receiver, say $\bfa^\T_u$ and $\bfa^\T_w$, for $u,w\in[v]$. Notice that $u\neq w$ since $\abs{\mu}\leq b-1$, while $i_\ell$ and $j_\ell$ are taken modulo $b$ when computing $s_1$.

By the properties of the $(2,b,v)$-packing design, $u$ and $w$ are contained in at most one block, which the receiver can now identify as block $\ell$. Also, knowing $u$ and $w$ enables the receiver to uniquely recover $i_\ell \bmod b$ and $j_\ell \bmod b$. Since $j_\ell = i_\ell + \mu$, by using $s_2$ the receiver can now recover $\mu$ in the following way,
\begin{align*}
\mu'&\eqdef (j_\ell \bmod b)-(i_\ell \bmod b), \\
\mu &= \begin{cases}
\mu' & s_2=0, \\
\mu'+b & s_2=1, \mu'<0, \\
\mu'-b & s_2=1, \mu'>0.
\end{cases}
\end{align*}

Finally, using the last part of the syndrome, we can now write
\begin{align*}
s_3 &= \alpha^{[\bfi]_n}+\alpha^{[\bfj]_n} = 
\alpha^{[\bfi]_n} + \alpha^{[\bfi+\mu\bfeps_\ell]_n} = \alpha^{[\bfi]_n} \cdot\parenv*{1+\alpha^{[\mu\bfeps_\ell]_n}}
\end{align*}
Recall that $0<\abs{\mu}\leq b-1$, hence
\[
0 < \abs*{[\mu\bfeps_\ell]_n} \leq (b-1)n^{D-1} < 2^m-1 \leq \ord(\alpha),
\]
and so
\[
1+\alpha^{[\mu\bfeps_\ell]_n}\neq 0.
\]
The receiver now needs to solve
\[
\alpha^{[\bfi]_n}=\frac{s_3}{1+\alpha^{[\mu\bfeps_\ell]_n}}.
\]
By proposition~\ref{prop:baseq}, this has a unique solution, and the receiver knows the two positions in error, $\bfi$ and $\bfj$. It can then flip the bits in these two positions in $\bfy$ and correctly recover the transmitted codeword $\bfc$.
\end{proof}

Construction~\ref{con:straight} depends on the choice of packing design, with different designs resulting in different excess redundancy, and perhaps parameter restrictions. We give two examples for choosing the packing design. First, a trivial packing:

\begin{corollary}
\label{cor:strxi1}
For all $b\geq 2$ and $D\geq 1$, the code $C$ from Construction~\ref{con:straight} may be constructed to have excess redundancy
\[
\xi(C) \leq 2\log_2 b + 2 \log_2 D + 5. 
\]
\end{corollary}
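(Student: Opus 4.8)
The plan is to instantiate Construction~\ref{con:straight} with the simplest possible $(2,b,v)$-packing design, namely one built from $D$ pairwise disjoint blocks, and then read off the excess redundancy directly from the row dimensions of the parity-check matrix.

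First I would fix the packing design. Take $X=[bD]$ and, for each $i\in[D]$, set the block $S_i=\set{ib,ib+1,\dots,ib+b-1}$. These are $D$ pairwise disjoint $b$-subsets of $X$, so every $2$-subset of points lies in at most one block: a pair lying inside a single $S_i$ is contained in exactly that block, while a pair straddling two blocks is contained in none. Hence $(X,\cB)$ with $\cB=\set{S_0,\dots,S_{D-1}}$ is a legal $(2,b,v)$-packing design with $v=bD$ and exactly $D$ blocks, as Construction~\ref{con:straight} requires.

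Next I would count the rows of $H$ in~\eqref{eq:Hstr}. Its four parts contribute $1$, $2a$, $1$, and $m$ rows respectively (the second part has height $2a$ since $A$ is $2a\times v$), so $r(C)=2a+m+2$ with $a=\ceil*{\log_2(v+1)}$ and $m=\ceil*{\log_2(n^D+1)}$. Subtracting $\ceil*{\log_2 N}=\ceil*{\log_2 n^D}$ gives $\xi(C)=2a+2+\parenv*{m-\ceil*{\log_2 n^D}}$. As in Corollary~\ref{cor:linfxi}, the bracketed difference is $0$ or $1$ (it equals $1$ exactly when $n^D$ is a power of $2$), so $\xi(C)\le 2a+3$.

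Finally I would substitute $v=bD$ and bound $a=\ceil*{\log_2(bD+1)}\le\log_2(bD)+1=\log_2 b+\log_2 D+1$, using $\ceil*{\log_2(x+1)}\le\log_2 x+1$ for positive integers $x$. This yields $\xi(C)\le 2\parenv*{\log_2 b+\log_2 D+1}+3=2\log_2 b+2\log_2 D+5$, as claimed. I expect no real obstacle here: the only genuine design decision is to exhibit a legal packing design with few points, and disjoint blocks make both the packing property and the point count $v=bD$ immediate, after which the argument is the same rounding bookkeeping already used in Corollaries~\ref{cor:linfxi} and~\ref{cor:linfxiext}.
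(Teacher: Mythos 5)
Your proposal is correct and matches the paper's proof essentially verbatim: the paper likewise instantiates Construction~\ref{con:straight} with a trivial packing design obtained by partitioning $[Db]$ into $D$ disjoint $b$-subsets (so $v=Db$), and then bounds $\xi(C)=2+2\ceil*{\log_2(Db+1)}+\ceil*{\log_2(n^D+1)}-\ceil*{\log_2(n^D)}$ using the same rounding inequalities. Your explicit verification of the packing property and of the $0$-or-$1$ rounding gap only spells out what the paper treats as immediate.
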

\begin{proof}
We choose a trivial $(1,b,Db)$-packing design (which is obviously also $(2,b,Db)$), by partitioning $[Db]$ into $D$ arbitrary disjoint subsets to serve as blocks. Thus,
\begin{align*}
\xi(C) &= 2+2\ceil*{\log_2(Db+1)} + \ceil*{\log_2(n^D+1)} - \ceil*{\log_2 (n^D)} \\
& \leq 2 \log_2 b + 2\log_2 D + 5,
\end{align*}
by using $\ceil{\log_u(x+1)}\leq \log_u x + 1$ for positive integers $u,x$, as well as $x\leq \ceil{x}\leq x+1$ for all real $x$.
\end{proof}

Next, we use a Steiner system:

\begin{corollary}
\label{cor:strxi2}
For all $b\geq 2$ and $D\geq 1$, the code $C$ from Construction~\ref{con:straight} may be constructed to have excess redundancy
\[
\xi(C) \leq 4\log_2(b-1) + \log_2 D + 9.
\]
\end{corollary}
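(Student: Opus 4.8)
The plan is to improve on Corollary~\ref{cor:strxi1} by feeding Construction~\ref{con:straight} a far more economical packing design. In the trivial packing the $D$ blocks are (essentially) disjoint, forcing $v=Db$ points; but a $(2,b,v)$-packing design only forbids a pair of points from recurring across blocks, so the $D$ blocks together cover $D\binom{b}{2}$ \emph{distinct} pairs. Hence any usable design satisfies $D\binom{b}{2}\le\binom{v}{2}$, i.e.\ $v(v-1)\ge Db(b-1)$, and a design meeting this up to lower-order terms uses only $v=O(b\sqrt{D})$ points --- a quadratic saving over $Db$. The payoff is immediate from the redundancy bookkeeping: $H$ has $1+2a+1+m$ rows with $a=\ceil{\log_2(v+1)}$, so exactly as in Corollary~\ref{cor:strxi1},
\[
\xi(C)=2+2\ceil{\log_2(v+1)}+\ceil{\log_2(n^D+1)}-\ceil{\log_2 n^D}\le 2\ceil{\log_2(v+1)}+3\le 2\log_2 v+5,
\]
and the whole proof reduces to exhibiting a $(2,b,v)$-packing with exactly $D$ blocks and $v=O(b\sqrt D)$.

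The concrete steps are: (i)~record the bound above; (ii)~choose $v$; (iii)~substitute and simplify. For (ii) I would take $q$ to be a prime power with $q\ge\max\set{b-1,\ceil{\sqrt D}}$, which by Bertrand's postulate (as used in the proof of Corollary~\ref{cor:l1xi}) exists with $q\le 2\max\set{b-1,\ceil{\sqrt D}}$, and realize the packing as a transversal design $\mathrm{TD}(b,q)$ --- equivalently, the $b-2$ mutually orthogonal Latin squares of order $q$ supplied by $\F_q$, which exist precisely because $q\ge b-1$. This is a $(2,b,v)$-packing with $v=bq$ points and $q^2\ge D$ blocks, of which I keep any $D$. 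For (iii), substituting $v=bq\le 2b\max\set{b-1,\ceil{\sqrt D}}$ into $\xi(C)\le 2\log_2 v+5$ and simplifying with $\ceil{\sqrt D}\le 2\sqrt D$ and $b\le 2(b-1)$ --- splitting according to whether $\ceil{\sqrt D}\le b-1$ (so the $4\log_2(b-1)$ term dominates and the $\log_2 D$ term is free) or $\ceil{\sqrt D}>b-1$ (so $\log_2 D$ carries the $\sqrt D$ contribution) --- yields $\xi(C)\le 4\log_2(b-1)+\log_2 D+9$, the generous additive constant absorbing the $O(1)$ rounding losses.

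The main obstacle is step~(ii): guaranteeing, uniformly in $(b,D)$, a packing whose point count is as small as $O(b\sqrt D)$. A genuine Steiner system $S(2,b,v)$ is the densest possible packing and would be ideal, but its existence is restricted to admissible residues of $v$ and holds only above a threshold depending on $b$, which is awkward for small $D$. Routing through a transversal design sidesteps this, since the finite-field construction delivers the required design for \emph{every} prime-power order $q\ge b-1$, and Bertrand's postulate keeps $q$ --- hence $v=bq$, hence $a$ --- within a constant factor of its optimum; the slack built into the constants $4$ and $9$ then comfortably absorbs this factor, the three ceilings, and (in degenerate regimes such as $b=2$, where one may instead take the complete packing on $\ceil{\tfrac{1+\sqrt{1+8D}}{2}}$ points) the remaining $O(1)$ losses.
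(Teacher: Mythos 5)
Your proposal is correct, and it reaches the stated bound by a genuinely different combinatorial ingredient than the paper. The paper also plugs a near-optimal packing into Construction~\ref{con:straight} and uses Bertrand's postulate, but its packing is a Steiner system: a $(2,q,q^s)$-packing design on $q^s$ points (citing the existence result for prime powers $q$ and $s\geq 2$), with $q$ the smallest prime power $\geq b$ and $s$ the smallest integer with $q^{2(s-1)}\geq D$; it then keeps $D$ blocks and truncates each block from $q$ points down to $b$, getting $v=q^s\leq 4(b-1)^2\sqrt{D}$ and hence the claimed bound. You instead take a transversal design $\mathrm{TD}(b,q)$ from the $q-1$ MOLS over $\F_q$ with $q$ a prime power, $q\geq\max\set{b-1,\ceil{\sqrt{D}}}$, whose $q^2\geq D$ blocks already have size exactly $b$, giving $v=bq\leq 2b\max\set{b-1,\ceil{\sqrt{D}}}$. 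This is actually a \emph{smaller} point count --- a maximum where the paper has a product --- so your route proves something slightly stronger, roughly $\xi(C)\leq 2\log_2 b+2\log_2\max\set{b-1,\sqrt{D}}+O(1)$, which beats the stated additive form $4\log_2(b-1)+\log_2 D+9$ when $b$ and $D$ are both large; your Fisher-type counting remark $v(v-1)\geq Db(b-1)$ correctly identifies this as order-optimal. The price of your route is the degenerate case $b=2$, where $4\log_2(b-1)=0$ and the $\mathrm{TD}$ constants give only $\log_2 D+11$; your fallback to the complete packing on $\ceil*{\tfrac{1+\sqrt{1+8D}}{2}}$ points does repair this (it yields about $\log_2 D+\log_2 9+5<\log_2 D+9$), whereas the paper's single Steiner-system choice covers all $b\geq 2$ uniformly. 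One cosmetic note: your two-case analysis (whether $\ceil{\sqrt{D}}\leq b-1$ or not) needs $b\leq(b-1)^2$, i.e.\ $b\geq 3$, in the second case, so the $b=2$ fallback is genuinely required, not optional; with that included, all constants close and the proof is complete.
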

\begin{proof}
We make use of an optimal packing design as follows. For all prime power $q$, and any integer $s\geq 2$, there exists a $(2,q,q^s)$-packing design with $q^{s-1}(1+q+q^2+\dots+q^{s-1})$ blocks (this is in fact a Steiner system, see~\cite[Theorem 5.11]{ColDin07}).

To make things concrete, let $q$ be the smallest prime power such that $q\geq b$. Thus, by Bertrand's postulate,
\[ b \leq q \leq 2b-3 \leq 2(b-1).\]
Let $s\geq 2$ be the smallest integer such that $q^{2(s-1)} \geq D$, hence,
\[
q^{2(s-2)} < D \leq q^{2(s-1)}.
\]
Construct the $(2,q,q^s)$-packing design, and then arbitrarily keep only $D$ of its blocks, and in each such block, arbitrarily keep only $b$ elements. We note that the number of points in the design, $q^s$, satisfies
\[
q^s \leq \sqrt{D}\cdot q^2 \leq \sqrt{D} \cdot 4(b-1)^2.
\]
Using this in the definition for $\xi(C)$ we get
\begin{align*}
\xi(C) &= 2+2\ceil*{\log_2(q^s+1)} + \ceil*{\log_2(n^D+1)} - \ceil*{\log_2 (n^D)} \\
& \leq 4 \log_2 (b-1) + \log_2 D + 9,
\end{align*}
by using $\ceil{\log_u(x+1)}\leq \log_u x + 1$ for positive integers $u,x$, as well as $x\leq \ceil{x}\leq x+1$ for all real $x$.
\end{proof}

For a lower bound on the excess redundacy we state and prove the following theorem.

\begin{theorem}
\label{th:strxilower}
Let $D\geq 1$ and $n\geq b\geq 2$ be integers, and let $C$ be an $[N=n^D,k]$ $D$-dimensional code of length $n^{\times D}$, that is capable of correcting a single $2$-weight-limited $b$-burst in the straight model. Then its excess redundancy is lower bounded by
\[
\xi(C) \geq \log_2 (b-1) + \log_2 D -2.
\]
\end{theorem}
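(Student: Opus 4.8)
The plan is to invoke the ball-packing bound~\eqref{eq:rlowbound}, which immediately gives
\[
\xi(C) \geq \log_2\abs*{E_\str(b,N)} - \ceil*{\log_2 N},
\]
and then to lower bound $\abs{E_\str(b,N)}$ by counting the error patterns it contains. As in the proofs of Theorem~\ref{th:linfxilower} and Theorem~\ref{th:l1xilower}, I would organize the count by Hamming weight. The set $E_\str(b,N)$ contains the single all-zero pattern and the $n^D$ weight-$1$ patterns, but for the lower bound it suffices to retain only the weight-$2$ patterns and discard the rest.

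A weight-$2$ straight burst is an unordered pair $\set{\bfi,\bfj}$ that agrees in all coordinates except one, say coordinate $\ell\in[D]$, with $0<\abs{i_\ell-j_\ell}\leq b-1$. The key observation that keeps the count clean is that the differing axis $\ell$ is uniquely determined by the pair, so summing over $\ell$ introduces no double counting across axes. For a fixed axis $\ell$, the $D-1$ remaining coordinates may be chosen freely in $n^{D-1}$ ways, and the number of unordered pairs $\set{a,c}\subseteq[n]$ with $0<\abs{a-c}\leq b-1$ is
\[
\sum_{d=1}^{b-1}(n-d) = (b-1)n - \frac{(b-1)b}{2} = (b-1)\parenv*{n-\frac{b}{2}}.
\]
Multiplying by the $n^{D-1}$ free coordinate choices and the $D$ axis choices yields
\[
\abs*{E_\str(b,N)} \geq D(b-1)n^{D-1}\parenv*{n-\frac{b}{2}}.
\]

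To finish, I would take logarithms and use $\ceil{\log_2 N}\leq D\log_2 n + 1$, so that the factors $n^{D-1}$ and $n^D$ nearly cancel, leaving
\[
\xi(C) \geq \log_2 D + \log_2(b-1) + \log_2\parenv*{1-\frac{b}{2n}} - 1.
\]
Since $n\geq b$ we have $\frac{b}{2n}\leq\frac12$, whence $\log_2\parenv*{1-\frac{b}{2n}}\geq -1$, which gives the claimed bound $\xi(C)\geq\log_2(b-1)+\log_2 D-2$. The only substantive step is the exact count of axis-parallel pairs, and in contrast to the $L_\infty$ and $L_1$ models there is no interaction between coordinates, so the count factorizes completely; I expect the sole thing to get right is the boundary term $\sum_{d=1}^{b-1}(n-d)$ together with the remark that distinct axes contribute disjoint families of pairs. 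Consequently I anticipate no genuine obstacle here beyond this bookkeeping.
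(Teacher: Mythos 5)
Your proof is correct and takes essentially the same approach as the paper's: the ball-packing bound \eqref{eq:rlowbound} combined with the per-axis count $\sum_{d=1}^{b-1}(n-d) = (b-1)\parenv*{n-\frac{b}{2}}$ multiplied by $Dn^{D-1}$, followed by the same use of $n\geq b$ to absorb the boundary term. The only cosmetic difference is that the paper keeps the weight-$0$ and weight-$1$ patterns in an exact count of $\abs{E_\str(b,N)}$ before lower bounding, whereas you discard them at the outset; the resulting bound is identical.
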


\begin{proof}
Again, we follow~\eqref{eq:rlowbound}, and write
\[
\xi(C) \geq \log_2\abs*{E_{\str}(b,N)}-\ceil*{\log_2 N}.
\]
The error patterns in $E_{\str}(b,N)$ contain the unique all-zero pattern, $N=n^D$ single-error patterns, and all two-error patterns, where the positions differ in a single coordinate by at most $b-1$. Thus,
\begin{align}
\abs*{E_\str(b,N)} &=1+n^D + \sum_{i=1}^{b-1} D n^{D-1}(n-i) = 1 + ((b-1)D+1)n^D - Dn^{D-1}\frac{b(b-1)}{2} \nonumber \\
& \geq \frac{b-1}{2}D n^D, \label{eq:strlow}
\end{align}
where for the last inequality we used $n\geq b$. Putting everything together we get
\[
\xi(C) \geq \log_2\abs*{E_{\str}(b,N)}-\ceil*{\log_2 n^D}
\geq \log_2 (b-1) + \log_2 D -2.
\]
\end{proof}

\begin{remark}
\label{rem:strbound}
In Theorem~\ref{th:strxilower}, if we require $n\geq D(b^2-b)/2$, then~\eqref{eq:strlow} may be further improved to $\abs{E_\str(b,N)}\geq (b-1)Dn^D$, resulting in a slight improvement,
\[
\xi(C) \geq \log_2 (b-1) + \log_2 D -1.
\]
\end{remark}

%%%%%%%%%%%%%%%%%%%%%%%%%%%%%%%%%%%%%%%%%%%%%%%%%%%%%%%%%%%%%%%%
%%%%%%%%%%%%%%%%%%%%%%%%%%%%%%%%%%%%%%%%%%%%%%%%%%%%%%%%%%%%%%%%
%%%%%%%%%%%%%%%%%%%%%%%%%%%%%%%%%%%%%%%%%%%%%%%%%%%%%%%%%%%%%%%%
\section{Conclusion}
\label{sec:conc}
%%%%%%%%%%%%%%%%%%%%%%%%%%%%%%%%%%%%%%%%%%%%%%%%%%%%%%%%%%%%%%%%
%%%%%%%%%%%%%%%%%%%%%%%%%%%%%%%%%%%%%%%%%%%%%%%%%%%%%%%%%%%%%%%%
%%%%%%%%%%%%%%%%%%%%%%%%%%%%%%%%%%%%%%%%%%%%%%%%%%%%%%%%%%%%%%%%

In this paper we studied codes capable of correcting a single $2$-weight-limited burst of size $b$ in different models. We constructed codes and computed their excess redundancy. We also proved lower bounds on the excess redundancy in the various models. The results are summarized in Table~\ref{tab:summary}.

\begin{table}[t]
\caption{A summary of the excess redundancy achieved by constructions for $D$-dimensional codes capable of correcting a single $2$-weight-limited $b$-burst, and general lower bounds on it. Some bounds require the code length to be large enough.}
\label{tab:summary}
\begin{tabular}{cll}
\toprule%
Model & $\xi(C)$ & Location  \\
%%%%%%%%%%%%%%%%%%%%%%%%%%%%%%%%%%%%%%%%%%%%%
\midrule
%%%%%%%%%%%%%%%%%%%%%%%%%%%%%%%%%%%%%%%%%%%%%
\multirow{5}{*}{$L_\infty$} &
=$\ceil{\log_2(b+1)}$ &
\cite{Yaa07},\cite[Th.~6]{EtzYaa09}, only $D=1$ \footnotemark{} \\
\cmidrule{2-3}
&
$\leq 3\ceil{2\log_2 b}+3$ &
\cite[Th.~7]{EtzYaa09}, only $D=2$ \\
\cmidrule{2-3}
&
$\leq
2\ceil{\log_2(b^D+1)}+D+1$ &
Construction~\ref{con:linf}, Corollary~\ref{cor:linfxi} \\
\cmidrule{2-3}
&
$\leq D\log_2(2b)+3$ &
Construction~\ref{con:linfext}, Corollary~\ref{cor:linfxiext} \footnotemark{} \\
\cmidrule{2-3}
&
$\geq D\log_2(2b-1)-2$ &
Theorem~\ref{th:linfxilower} \\
%%%%%%%%%%%%%%%%%%%%%%%%%%%%%%%%%%%%%%%%%%%%%
\midrule
%%%%%%%%%%%%%%%%%%%%%%%%%%%%%%%%%%%%%%%%%%%%%
\multirow{4}{*}{$L_1$} &
$\leq \ceil{\log_2 D}+1$
&
\cite[Th.~4]{SchEtz05}, only $b=2$ \\
\cmidrule{2-3}
&
$\leq 2b \log_2 b + 2(b-1)\log_2(D+1) + 4b+D+4$
&
Construction~\ref{con:l1}, Corollary~\ref{cor:l1xi} \footnotemark{} \\
\cmidrule{2-3}
&
$\leq 3\log_2 D + 6$
&
Remark~\ref{rem:l1b3}, only for $b=3$ \\
\cmidrule{2-3}
&
$\geq
\begin{cases}
b-1 + \log_2 \binom{D}{b-1} - 3 & D\geq b-1, \\
D + \log_2 \binom{b-1}{D} - 3 & D<b-1,
\end{cases}$ &
Theorem~\ref{th:l1xilower} \\
%%%%%%%%%%%%%%%%%%%%%%%%%%%%%%%%%%%%%%%%%%%%%
\midrule
%%%%%%%%%%%%%%%%%%%%%%%%%%%%%%%%%%%%%%%%%%%%%
\multirow{4}{*}{straight} &
$\leq \ceil{\log_2 D}+1$
&
\cite[Th.~4]{SchEtz05}, only $b=2$ \\
\cmidrule{2-3}
&
$\leq 2\log_2 b + 2\log_2 D + 5$
&
Construction~\ref{con:straight}, Corollary~\ref{cor:strxi1} \\
\cmidrule{2-3}
&
$\leq 4\log_2 (b-1) + \log_2 D + 9$
&
Construction~\ref{con:straight}, Corollary~\ref{cor:strxi2} \\
\cmidrule{2-3}
&
$\geq \log_2 (b-1) + \log_2 D -2$ &
Theorem~\ref{th:strxilower} \\
\botrule
\end{tabular}
\end{table}

\addtocounter{footnote}{-2}
\footnotetext{In the one-dimensional case, $D=1$, all models are equal, and the line in the table pertains to all.}
\addtocounter{footnote}{1}
\footnotetext{The codewords are of size $(nb)^{\times D}$, and it is required that $\gcd(b,2^m-1)=1$,  where $m$ is from~\eqref{eq:linfm}.}
\addtocounter{footnote}{1}
\footnotetext{The codewords are of size $(np)^{\times D}$, and it is required that $\gcd(p,2^m-1)=1$, where $p$ is the smallest prime $\geq 2b+1$, and $m$ is from~\eqref{eq:l1m}.}

We would like to point out a few conclusions from the results. In the $L_\infty$ model, the best construction has excess redundancy almost matching the lower bound. Construction~\ref{con:linfext} obtains a lower excess redundancy than Construction~\ref{con:linf} by limiting the possible values of $n$. A similar approach is seen in Construction~\ref{con:l1} for the $L_1$ model. We comment that the restriction on $n$ in Construction~\ref{con:l1} may be relaxed at the cost of increased excess redundancy.

In contrast, it the largest gap between construction and bound seems to occur in the $L_1$ model. We note that the error models are nested in the sense that the error patterns allowable in the straight model are contained in those of the $L_1$ model, which in turn, are contained in those of the $L_\infty$ model. Thus, we can use the constructions in the $L_\infty$ model to correct errors in the $L_1$ model. In some cases this might even result in a lower excess redundancy. By closely checking the excess redundancy, given any $b$, for all sufficiently large $D$ (depending on $b$), the upper bound on the excess redundancy of Construction~\ref{con:l1} for the $L_1$ model (see Corollary~\ref{cor:l1xi}) is smaller than the lower bound on the excess redundancy in the $L_\infty$ model (see Theorem~\ref{th:linfxilower}). Thus, despite the gap it seems that Construction~\ref{con:l1} has merit.

Finally, the straight-model results depend on the choice of a packing design. The designs used in Corollary~\ref{cor:strxi1} and Corollary~\ref{cor:strxi2} are two extremes: a trivial set of disjoint blocks, versus a Steiner system. More generally, a $(2,k,v)$-packing design is equivalent to a binary constant-weight code of length $v$, weight $k$, and minimum Hamming distance $2(k-1)$. However, constant-weight codes such as~\cite{GraSlo80} seem to provide inferior results.

\backmatter

\bmhead*{Acknowledgments}
This work was supported in part by the Israel Science Foundation (Grant No. 1789/23).

\bibliography{allbib}

\end{document}